\documentclass[runningheads,a4paper]{llncs}

\usepackage[T1]{fontenc}
\usepackage{stmaryrd}
\usepackage{algorithm}
\usepackage{algorithmic}
\usepackage{graphicx}
\usepackage{floatflt}
\usepackage{wrapfig}
\usepackage{pstricks}
\usepackage{rotating}
\usepackage{subfigure}
\usepackage{amsmath}
\usepackage{amssymb}
\usepackage{theorem}
\usepackage[active]{srcltx} 
\usepackage{amsmath,amssymb}
\usepackage{booktabs}
\usepackage{xcolor}
\usepackage{hyperref}
\usepackage{listings}
\usepackage{enumerate}
\usepackage[justification=RaggedRight, singlelinecheck=false]{caption}
\usepackage{marvosym}

\newcommand{\QED}{\hspace*{\fill}$\Box$}

\newcommand{\T}{{\cal T}}

\newcommand{\A}{{\cal A}}
\newcommand{\B}{{\cal B}}
\newcommand{\C}{{\cal C}}

\newcommand{\M}{{\cal M}}

\newcommand{\ignore}[1]{}

\newcommand{\ite}[3]{\text{ {\sf if} }#1\text{ {\sf then} }#2\text{  {\sf else} }#3}

\newenvironment{enumerate-} 
{\begin{enumerate}
    
   \setlength{\parskip}{-1ex}              
   \setlength{\itemsep}{1.5ex}             
}
{
 \end{enumerate}
}

\newtheorem{defi}{Definition}
\newtheorem{thm}{Theorem}
\newtheorem{lem}[thm]{Lemma}

\newtheorem{rem}{Remark}
\newtheorem{ex}{Example}

\begin{document}
\pagestyle{headings}
\title{On Constructing Most General Solutions for Parametric
  Constraints (Extended Preprint)}
 \titlerunning{On Constructing Most General Solutions for Parametric Constraints}
\author{Viorica
  Sofronie-Stokkermans\orcidID{0000-0002-8486-9955}}
\authorrunning{Viorica Sofronie-Stokkermans}
\institute{University of Koblenz, Koblenz, Germany \\
\email{sofronie@uni-koblenz.de}}
\maketitle

\begin{abstract}
Let $\T$ be a theory allowing a form of elimination of 
existential quantifiers (possibly for formulae in a certain class).
We analyze possibilities of 
constructing (most general) solutions w.r.t.\ $\T$ for 
formulae of the form $\exists x_1 \dots \exists x_n \phi(x_1, \dots, x_n, y_1, \dots, y_m)$, 
where $\phi$ is a quantifier-free conjunction of literals 
in the signature of $\T$, and the free variables $y_1, \dots, y_m$ 
are regarded as parameters. 
We show that in the presence of function symbols which describe ``{\sf if}-{\sf then}-{\sf
  else}''  constructions in certain models of $\T$, we can describe 
the most general solution of such formulae, thus generalizing 
results about the existence of most general unifiers in discriminator
varieties. We illustrate the ideas on examples. 
\end{abstract}

\section{Introduction}

Finding solutions for parametric constraints is important in many
applications. 
While state of the art SMT-systems can generate models for 
satisfiable formulae, they usually do not distinguish between 
parameters (constants/function symbols) and existentially 
quantified symbols. 
In this paper we analyze possibilities of 
constructing parametric solutions and even 
most general solutions of parametric constraints 
w.r.t.\ logical theories $\T$ allowing some form of quantifier 
elimination. 
The constraints we consider are formulae of the form 
\begin{equation}
\exists x_1, \dots, x_n \phi(x_1, \dots, x_n, y_1, \dots, y_m), \label{eq1}
\end{equation}
where $\phi$ is a conjunction of literals in the signature of $\T$.  
Our goals are two-fold: 
\begin{itemize}
\item[(i)] Find a necessary and sufficient condition on 
$y_1, \dots, y_m$, seen as parameters, under which 
$\exists x_1, \dots, x_n \phi(x_1,\dots, x_n, y_1, \dots, y_m)$ 
evaluates to true w.r.t.\
$\T$. 

\item[(ii)] Analyze possibilities of obtaining 
most general solutions for problems of form~(\ref{eq1}), i.e.\ 
substitutions from which all solutions can be retrieved as instances. 
\end{itemize}
We address (i) by showing that if $\T$ allows a form of quantifier elimination 
for a fragment to which the formula $\phi$ belongs, the 
quantifier free formula $\psi(y_1, \dots, y_m)$ 
equivalent to 
$\exists x_1, \dots, x_n \phi(x_1, \dots, x_n, y_1, \dots, y_m)$ 
w.r.t.\ $\T$ is such a necessary and sufficient condition.
For (ii), we show that in the presence of suitable ``{\sf if}-{\sf then}-{\sf else}'' constructions 
in certain 
models of $\T$ we can effectively 
describe the most general solution. 

Our results are inspired by the study of 
unification in Boolean algebras
\cite{Boole,Loewenheim,Rudeanu,Baader98}, 
in varieties generated by primal algebras \cite{Nipkow} 
and, more generally, in discriminator varieties \cite{Burris92}.
Consider the case of unification in Boolean
algebras i.e.\ problems of the form
$\exists x ((t_a \wedge x) \vee (t_b \wedge \neg x) \approx 0),$
where $t_a$ and $t_b$ are terms containing other variables, 
which can be considered to be parametric.

It is known \cite{Boole,Loewenheim,Rudeanu} that for every 
Boolean algebra $B$, if $a, b \in B$ then the equation 
$(a \wedge x) \vee (b \wedge \neg x) \approx 0$ has a solution $x$ in
$B$ iff $a \wedge b = 0$, 
and in this case the set of solutions in $B$ 
is the interval 
$S = [b, \neg a]$ of $B$. 
If $a \wedge b = 0$ then 
the set of solutions can also be described using the 
{\em reproductive solution}, $F : B \rightarrow B$ defined by 
$F(x) = (b \vee x) \wedge \neg a$, which has the 
properties: 
\begin{itemize}
\item[(a)] $F(x) \in S$ for every $x \in B$ 
(i.e.\ $F$ is a parametric solution);
\item[(b)] For every $x \in S$ we have $F(x) = x$.
\end{itemize} 
For problems 
of the form $\exists x_1 \dots, \exists x_n (f(x_1. \dots, x_n)
\approx 0)$, where $f$ is an $n$-ary Boolean function,
the variables can be eliminated successively and
the reproductive solutions can be reconstructed. 
If $S$ is the set of all solutions in an algebra $B$, 
a reproductive solution is a map $F : B^n \rightarrow B^n$ with
\begin{itemize}
\item[(a)]  $F(x) \in S$ for every $x \in B^n$; and 
\item[(b)] For every $x \in S$ we have $F(x) = x$.
\end{itemize} 
These ideas have been generalized in \cite{Nipkow} to varieties generated
by primal algebras and in \cite{Burris92} to discriminator 
varieties, i.e.\ classes of algebras generated by a subset $K$ of algebras 
allowing a {\em switching term}, i.e.\ a term $s(x, y, u, v)$ 
with the property that for every algebra $\A \in K$, and all elements
$a, b, c, d \in \A$ we have $s(a, b, c, d) = c$ if $a = b$ and 
$s(a, b, c, d) = d$ if $a \neq b$. 
Nipkow \cite{Nipkow} investigated the links between reproductive
solutions and most general unifiers for varieties generated
by primal algebras, and 
Burris \cite{Burris92} proved that discriminator varieties have unitary unification
and the most general unifiers can be expressed 
using the discriminator term (cf.\ Theorem~\ref{burris} on page~\pageref{burris}). 
However, these results consider only unification problems 
without parameters. In this paper we investigate the existence of most
general solutions and reproductive solutions for more general
parametric problems of the form~(\ref{eq1}). 

\smallskip
While there are many examples of discriminator varieties 
relevant for computer science (cf.\ Example~\ref{ex-discr-var}), 
the signature of many theories interesting in computer science is not 
purely equational (i.e.\ contains also other
predicate symbols in addition to equality), and discriminator terms do
not necessarily exist. 

 In this paper we identify examples of logical 
theories allowing ``{\sf if}-{\sf then}-{\sf else}'' constructions 
on some of their models,
which generalize the discriminator terms mentioned above.
Having such operations is desirable in many cases: 
In the presence of parameters, solutions of equalities or inequalities in linear arithmetic 
are often described using case distinctions; also in applications 
we often consider functions defined using case distinctions.
We identify situations in which (parametric)  
solutions of conjunctions of literals exist, and use these results for
obtaining ways of describing most general solutions of such
constraints, and to analyze possibilities of deciding if solutions 
depending only on a set of parameters exist. 
\noindent The main results can be summarized as follows: 
\begin{itemize}
\item We extend the results on unification in discriminator
  varieties in \cite{Burris92} to a form of
  unification with constants (or with parameters).
\item We prove that, in extensions of 
 theories allowing quantifier elimination with ``{\sf if}-{\sf
   then}-{\sf else}''  constructions, most general solutions of existential formulae
can be described by terms. 
\item We give a criterion for deciding if for existential
  formulae with free variables ${\overline y}$ 
w.r.t.\ theories allowing quantifier elimination there are solutions which do
  not depend on a subset of the free variables in ${\overline y}$. 
\item We extend these results to a class of second-order
  quantifier elimination problems: We show how to find most general solutions 
  of formulae with existentially quantified function symbols in such cases.
\end{itemize}
These results contribute to better understanding the links  between satisfiability 
checking for parametric problems, symbol elimination, synthesis 
and unification with constants. This is the extended version of \cite{sofronie-ijcar2026}.

\smallskip
\noindent {\em The paper is structured as follows:} In Section~\ref{prelim} we
introduce the main notions needed in the paper (logical theories,
algebra, unification, and in particular unification in discriminator
varieties), and prove that the results on unification in discriminator
  varieties in \cite{Burris92} can be extended to a form of
  unification with constants. In Section~\ref{sect:ite} 
we discuss possibilities of introducing ``{\sf if}-{\sf then}-{\sf else}''
operators. In Section~\ref{sect:solutions-mgs}
we show how 
most general solutions can be constructed for a class of 
parametric existential problems in classes of theories allowing 
quantifier elimination in the presence of ``{\sf if}-{\sf then}-{\sf
  else}''-operators. In Section~\ref{sect:some-vars} give a criterion for deciding 
if for existential formulae with free variables ${\overline y}$ 
there are solutions which do 
  not depend on some of the variables in ${\overline y}$.  In Section~\ref{sect:soqe}
we show how these ideas can be extended to the analysis of 
certain second-order problems 
of the form $\exists f \,G(x_1, \dots, x_n)$, where $G$ is a
conjunction of flat clauses.  In Section~\ref{conclusion} we present 
the conclusions and plans for future work.

\

\section*{Table of contents}

\contentsline {section}{\numberline {1}Introduction}{1}{section.1.1}%
\contentsline {section}{\numberline {2}Preliminaries}{4}{section.1.2}%
\contentsline {subsection}{\numberline {2.1}Logic and logical theories}{4}{subsection.1.2.1}%
\contentsline {subsection}{\numberline {2.2}Algebraic structures}{4}{subsection.1.2.2}%
\contentsline {subsection}{\numberline {2.3}Unification}{5}{subsection.1.2.3}%
\contentsline {subsection}{\numberline {2.4}Unification in discriminator varieties}{6}{subsection.1.2.4}%
\contentsline {section}{\numberline {3}If-then-else operations}{9}{section.1.3}%
\contentsline {section}{\numberline {4}Solutions and most general solutions}{11}{section.1.4}%
\contentsline {subsection}{\numberline {4.1}Solution of a parametric existential formula}{12}{subsection.1.4.1}%
\contentsline {subsection}{\numberline {4.2}Most general solutions}{13}{subsection.1.4.2}%
\contentsline {section}{\numberline {5}Solutions depending on given variables}{18}{section.1.5}%
\contentsline {section}{\numberline {6}Second-order quantifier elimination}{20}{section.1.6}%
\contentsline {section}{\numberline {7}Conclusion}{25}{section.1.7}%

\section{Preliminaries}
\label{prelim}

We here introduce the main notions 
needed in the paper. 

\medskip
\noindent {\em Notation:} In what follows we will denote with (indexed
versions of) 
${\overline x}$ sequences of variables $x_1, \dots,
x_n$, and will sometimes write $\exists {\overline x} \phi$ for $\exists
x_1, \dots, x_n \phi$.  


\subsection{Logic and logical theories}
We assume known standard definitions from first-order logic  
such as structures, models, 
entailment,  satisfiability, unsatisfiability.  

We consider signatures of the form $\Pi = (\Sigma, {\sf Pred})$, 
where $\Sigma$ is a family of function symbols and ${\sf Pred}$
a family of predicate symbols. 
Let $X$ be a set of variables. 
We assume known notions such as terms and formulae over $\Pi$ 
with variables $X$. 
A $\Pi$-structure is a tuple 
$\A = (A, \{ f_{\A} \}_{f \in \Sigma}, \{p_{\A} \}_{p \in {\sf Pred}})$, 
where $A$ is a nonempty set (the
  universe of $\A$, denoted also $|\A|$) and for every $f \in \Sigma$ with
  arity $n$, $f_{\A} : A^n \rightarrow A$, and for every $p \in {\sf
    Pred}$ with arity $m$, $p_{\A} \subseteq A^m$. 
We assume known notions such as evaluation of a $\Sigma$-term $t$ in a structure
$\A$ w.r.t.\ a valuation $\beta : X \rightarrow \A$ (denoted
$\A(\beta)(t)$) and the evaluation of a $\Pi$-formula $\phi$ in $\A$
w.r.t.\ $\beta$ (notation: $\A(\beta)(\phi)$). We say that $\phi$ is
true (or holds) in $\A$ w.r.t.\ $\beta$ (notation: $(\A, \beta)
\models \phi$) if $\A(\beta)(\phi)$ is true. 
We denote
``falsum'' with $\perp$, so a formula $\phi$ is
unsatisfiable iff $\phi \models \perp$. 

In this paper, a theory $\T$ is described by a set of closed formulae 
(the axioms of the theory).  If $F$ and $G$ are formulae we 
write $F \models G$ (resp. $F \models_{\cal T} G$) 
to express the fact that every model of $F$ 
(resp. every model of $F$ which is also a model of 
$\T$) is a model of $G$. 
The definitions can be extended in a natural way to the case when 
$F$ is a set of formulae. Then $F \models_{\cal T} G$ if and only if 
${\cal T} \cup F \models G$. 
$F \models \perp$ means that $F$ is
unsatisfiable; $F \models_{\T} \perp$ means that there is no model of
$\T$ which is also a model of $F$. If there is a model of $\T$ which is also a
model of $F$ we say
that $F$ is $\T$-consistent.
We say that $F$ and $G$ are equivalent (notation: $F \equiv G$)
 iff $F \models G$ and $G
\models F$; $F$ and $G$ are equivalent 
w.r.t.\ $\T$ (notation: $F \equiv_{\T} G$) 
if $F \models_{\T} G$ and $G \models_{\T} F$.

A theory $\T$ with signature  
$\Pi$ {\em allows quantifier elimination} if for every $\Pi$-formula
$\phi$ there exists a quantifier-free $\Pi$-formula $\phi^*$
equivalent to $\phi$ w.r.t.\ $\T$.

\subsection{Algebraic structures} 
In the case in which the only
predicate symbol is $\approx$ (equality), we can assume that 
$\Sigma$ consists only of function symbols; in this case 
$\Sigma$-structures are called 
$\Sigma$-algebras. 
If $\A = (U_{\A}, \{ f_{\A} \}_{f \in \Sigma})$ is a 
$\Sigma$-algebra, and $t(x_1, \dots, x_n)$ is a term 
containing variables $x_1, \dots, x_n$, we will denote by 
$t_{\A} : U_{\A}^n \rightarrow U_{\A}$ the associated function 
defined for all tuples $(a_1, \dots, a_n)$ by 
$t_{\A}(a_1, \dots, a_n) = \A(\beta)(t)$, where $\beta : X \rightarrow
U_{\A}$ is the valuation with $\beta(x_i) = a_i$ for every $i \in \{
1, \dots, n \}$.

A class $K$ of algebras satisfies an equation or implication 
$\gamma$  (notation:
$K \models \gamma$)
if for every algebra $\A \in K$ we have $\A \models \gamma$. 
The equational theory of a class of structures is the set of universal 
atomic formulas that hold in all members of the class. 
For a class of algebras, this is simply the collection of all equations 
that hold in all members of the class.
An equational class (or variety) 
is the class of all algebras in which a given set $E$ of 
equations holds. 
We say that a variety ${\cal V}$ is generated by a subset $K \subseteq
{\cal V}$ (notation: ${\cal V} = {\cal V}(K) = HSP(K)$) 
 iff  every algebra in ${\cal V}$ is \underline{h}omomorphic to a \underline{s}ubalgebra of a
  \underline{p}roduct of algebras in $K$. For further definitions cf.\ \cite{BurrisSankappanavar1981}.

\subsection{Unification} 
We present some notions and results on unification. 
The problems discussed in this paper are related to 
unification with constants, so 
we restrict to this subclass of unification problems. 
For more  general definitions cf.\ 
\cite{BaaderSnyderHandbook99,BaaderSchulz96}. 

Let ${\cal V}$ be an equational theory axiomatized by a set $E$ of
equations, $\Sigma$  
its signature, and $\Delta$ a signature containing $\Sigma$ such that 
$\Delta \backslash \Sigma$ is a set of new constant symbols. 
Let ${\cal S} : \{ s_1 = t_1, \dots,  s_k = t_k \}$ be a system of equations, 
where $s_i, t_i \in T_{\Delta}(Y)$, the set of terms over the
signature $\Delta$ with variables in $Y$. 
Then ${\cal S}$ defines an {\em $E$-unification problem with free
  constants} in $\Delta$. 
Unification with free constants is related to 
unification  in which the terms contain free variables 
\cite{BaaderSnyderHandbook99,BaaderSchulz96,Baader98}.

A unification problem ${\cal S}$ with free constants in $\Delta$ 
{\em has a solution} with respect to   
${\cal V}$ if there is a substitution 
$\sigma : Y \rightarrow T_{\Delta}(Y)$ such that 
${\cal V} \models \sigma(s_i) \approx \sigma(t_i)$ for every $1 \leq i \leq k$. 
If this is the case, $\sigma$ is a solution (or a
unifier) for ${\cal S}$.\footnote{Note that if $\Sigma$ is a
  signature, $C$ a set of additional constants, and $Y$ a set of
  variables, $T_{\Sigma \cup C}(Y) = T_{\Sigma}(Y \cup C)$.}

${\cal S}$ is an $E$-unification problem with 
{\em linear constant restrictions}
if and only if it is an $E$-unification problem with constants and, in addition, 
a linear ordering $<$ on the variables and free constants occurring in 
${\cal S}$ is given. If ${\cal S}$ is an $E$-unification problem with linear constant restrictions, 
a solution for ${\cal S}$ is a substitution 
$\sigma : Y \rightarrow T_{\Delta}(Y)$
with the additional property that for every variable $y \in Y$ and 
every constant $c \in \Delta \backslash \Sigma$, if $y < c$ then $c$ does not occur in $\sigma(y)$.
The importance of $E$-unification with linear constant restrictions
is justified by its link with the {\em positive theory of $E$}, 
i.e.\ the 
collection of those closed formulae valid in the class of all 
models of $E$ which are (equivalent to a formula) of the form 
\begin{eqnarray}
\phi = Q_1 x_1 \dots Q_k x_k (\bigvee_{i=1}^n (s_{i1} = t_{i1} \wedge \dots
\wedge s_{i{m_i}} = t_{i{m_i}})),
\label{positive}
\end{eqnarray}
where $Q_1, \dots, Q_k \in \{ \exists, \forall \}$, and $x_1, \dots, x_k$
are distinct variables.

\begin{theorem}[\cite{BaaderSchulz96}]
If $E$-unification with linear constant restrictions 
is decidable then the positive theory of $E$ is
decidable.  
\label{proposition-baader-positive} 
\end{theorem}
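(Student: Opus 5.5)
The idea is to reduce validity of a closed positive sentence $\phi$ of the form~(\ref{positive}) to a finite family of $E$-unification problems with linear constant restrictions. The order of the quantifier prefix will induce the linear order.

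First, I treat the universally quantified variables. Replacing each $\forall x_j$ by a fresh free constant $c_j$ does not change validity in the class of all models of $E$. The reason is that free constants behave like universally quantified variables, since the term algebra over the new constants is free. The existentially quantified variables remain variables. I then order variables and constants according to their position in the prefix. A constant $c_j$ from $\forall x_j$ is placed after every existential variable $x_i$ with $i<j$, and before every $x_i$ with $i>j$. The condition "$y<c$ implies $c$ does not occur in $\sigma(y)$" then says exactly that the witness for $\exists x_i$ may depend only on the universally quantified variables that precede it. This is the Skolemization dependency.

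Second, I handle the disjunction. The matrix is a disjunction of conjunctions of equations, and the solutions have to be terms. I would show that $\phi$ is valid in all models of $E$ iff there is some $i$ for which the conjunction $s_{i1}=t_{i1}\wedge\dots\wedge s_{im_i}=t_{im_i}$, read as a unification problem with the above linear constant restrictions, has a solution.

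The direction "$\Leftarrow$" is the easy one. Given a solution $\sigma$, in any model $\mathcal{A}$ of $E$ and any values of the universal variables, I choose the witness for $x_i$ to be the value of $\sigma(x_i)$. This value is well defined at the moment $x_i$ is chosen, because of the constant restrictions. Since $E \models \sigma(s)\approx\sigma(t)$, the $i$-th disjunct holds.

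The direction "$\Rightarrow$" is the main obstacle, and I would argue it in the free algebra $F_E(C\cup Y)$. Since $\phi$ is valid, it holds in this free algebra. I pick the universally quantified elements to be the generators $c_j$, and take the existential witnesses to be terms. Each such term may be chosen to depend only on earlier constants, because $\phi$ is also valid in the free algebra over the earlier constants only, which is a subalgebra. Some disjunct then holds in the free algebra, and that gives $E\models\sigma(s)\approx\sigma(t)$ for that disjunct.

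Getting this uniformly across the prefix requires care. I would proceed by induction on the prefix and use the fact that the free algebra on fewer generators embeds into the one on more generators. With this, the witness for $x_i$ is an element of $F_E(\{c_j : j<i\})$, that is, a term satisfying the restriction.

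Finally, there are only finitely many disjuncts, and each check is decidable by hypothesis. Hence validity of $\phi$ is decidable.
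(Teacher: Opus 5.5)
The paper gives no proof of this theorem; it is quoted from Baader and Schulz. Your reduction has the standard shape: universal variables become free constants, the prefix order becomes the linear order, and a sentence in disjunctive normal form is valid iff one of its disjuncts is solvable as a unification problem with linear constant restrictions. The direction from a solution to validity is fine, provided you evaluate the variables occurring in the range of $\sigma$ under one fixed valuation, independent of the quantifier game. Your opening claim that replacing each $\forall x_j$ by $c_j$ does not change validity is false by itself: it turns $\forall x_1\exists y_1\forall x_2\exists y_2$ into $\forall x_1\forall x_2\exists y_1\exists y_2$. It only holds together with the restrictions, which is what the two directions must establish.

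The real gap is in the hard direction, at the step you flag. Validity of $\phi$ in the subalgebra $F_E(\{c_j : j<i\})$, plus the embedding of smaller free algebras into larger ones, does not yield a usable witness for $x_i$. A witness obtained in the subalgebra only has to survive later universal choices ranging over that subalgebra. Formulas with universal quantifiers are not preserved upward along embeddings. For example, take $E=\emptyset$ over one unary symbol $f$ and the sentence $\forall x_1\exists y_1\forall x_2\exists y_2\,(x_2\approx f(y_2)\vee x_2\approx y_1\vee y_1\approx f(x_1))$. It is valid (take $y_1=f(x_1)$). In $F(\{c_1\})$ the witness $y_1=c_1$ also works, but in $F(\{c_1,c_2\})$ it fails at $x_2=c_2$.

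The missing idea is freshness. Work in $F_E(C\cup Y)$ with $Y$ infinite, and instantiate the universal quantifiers one at a time. Each is instantiated by a generator not occurring in representatives of the witnesses chosen so far. Equivalently, if you fix the $c_j$ in advance, take a witness that mentions some later $c_j$ and apply an automorphism of $F_E(C\cup Y)$. It should swap those $c_j$ with unused generators from $Y$ and fix everything chosen earlier; since automorphisms preserve truth, the image is again a witness. The witness for $x_i$ then contains only the $c_j$ with $j<i$, plus generators from $Y$, which you keep as variables in $\sigma(x_i)$. This is allowed since $\sigma\colon Y\to T_\Delta(Y)$.

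Your requirement that the witness lie in $F_E(\{c_j : j<i\})$ itself is too strong. If $\Sigma$ has no constant symbols and the prefix starts with $\exists$, that algebra is empty. With this repair, some disjunct holds at the chosen elements of the free algebra at the end. That means exactly $E\models\sigma(s_{ij})\approx\sigma(t_{ij})$ for all $j$ in that disjunct, and the rest of your argument (finitely many decidable checks) goes through.
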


A substitution $\sigma$ is more general w.r.t.\ ${\cal V}$ than 
a substitution $\sigma'$ if there is a substitution $\delta$ such that 
${\cal V} \models \sigma'(x) \approx \delta(\sigma(x))$ for every
variable $x \in X$. A substitution $\sigma$ is a most general 
solution (or a most general unifier) for ${\cal S}$  w.r.t.\ ${\cal V}$ if 
it is a solution for ${\cal S}$  w.r.t.\ ${\cal V}$ and is more general
than any other solution for ${\cal S}$ w.r.t.\ ${\cal V}$.

 One can study decidability of unifiability, 
the existence of unifiers, and their classification
according to generality.
If every unification problem has a most general unifier w.r.t.\ ${\cal
  V}$, we say that ${\cal V}$ has unitary unification.

\subsection{Unification in discriminator varieties} 
A  term $t(x, y, z)$ is a discriminator term for an algebra $\A$ 
if for all $a, b, c \in |\A|$,
$$t(a, b,c) = \left\{ \begin{array}{ll} c & \text{ if } a = b \\
 a & \text{ if } a \neq b.
\end{array} \right.$$
A term $s(x, y, u, v)$ is a switching term for an algebra $\A$ if 
for all $a, b, c, d \in |\A|$
$$s(a,b,c,d)= \left\{ \begin{array}{ll} c &\text{ if } a = b \\
d & \text{ if } a \neq b. 
\end{array} \right.$$
From a discriminator term we can construct a switching term 
and vice-versa, by

\smallskip
$
s(x, y, u, v)  =  t(t(x, y, u),t(x, y, v),v) \quad  \text{ and } \quad 
t(x, y, z)  =  s(x, y, z, x).
$

\smallskip
\noindent A variety ${\cal V}$ of algebras is a discriminator variety 
if there is a class $K$ of algebras
which generates ${\cal V}$ such that there is a ternary term $t(x, y, z)$ 
which is a discriminator term for every member of $K$. 
If ${\cal V}$ is a discriminator variety 
then $t(x, y, z)$ is a discriminator term for
precisely the simple algebras\footnote{A simple algebra is an algebra
  which has exactly two congruences, the total congruence and the
  identity.} in ${\cal V}$ (cf.\ \cite{Burris92}).
\begin{ex}
We give some examples of discriminator varieties: 
\begin{enumerate}
\item Boolean algebras ${\cal V}(B_2)$, where $B_2 = (\{ 0, 1 \}, \wedge,
  \vee, ', 0, 1)$ is the 2-element Boolean algebra.  

$t(x, y, z) = (x \wedge z) \vee (x \wedge y'
  \wedge z') \vee (x' \wedge y' \wedge z)$ is a discriminator term in
  the only simple Boolean algebras - the 1- and 2-element Boolean
  algebras. 
\item Boolean rings ${\cal V}({\mathbb Z}_2)$, where ${\mathbb Z}_2$ is the
  2-element ring. 

$t(x, y, z) = (x + y) \cdot x + (1 + x + y) \cdot z$ is a discriminator term in
  the only simple Boolean rings - the 1- and 2-element Boolean
  rings.
 
\item ${\cal V}({\mathbb Z}_p)$, for $p$ a prime number.

$t(x, y, z) = (x - y)^{p-1} \cdot x + (1 - (x - y)^{p-1}) \cdot z$ is a discriminator term in
  the only simple $p$-rings - the trivial ring and ${\mathbb Z}_p$.
\end{enumerate}
For further examples we refer to \cite{Burris92}.
\label{ex-discr-var}
\end{ex}

\begin{thm}[\cite{Burris92}]
Let ${\cal V}$ be a discriminator variety with signature $\Sigma$,
with switching term $s(x, y, u, v)$
on the simple algebras. 
Consider the unification problem
$\exists x_1 \dots x_n (p(x_1, \dots, x_n) \approx q(x_1,
\dots, x_n))$, where $x_1, \dots, x_n$ are variables in a set $X$.
Assume that the unification problem has a solution, and let
$t_1,\dots, t_n$ be terms in $T_{\Sigma}(X)$ such that 
${\cal V} \models p(t_1, \dots, t_n) = q(t_1, \dots, t_n)$. 
Then the substitution $\sigma : \{ x_1, \dots, x_n \} \rightarrow T_{\Sigma}(X)$ 
defined by: 
$$ \sigma(x_i) = s(p, q, x_i, t_i) \text{ for } 1 \leq i \leq n,$$
is a most general unifier.
\label{burris}
\end{thm}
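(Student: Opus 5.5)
The argument reduces everything to the simple algebras in ${\cal V}$. A discriminator variety is generated by its simple members, and by the remark above $s$ is a switching term on exactly these algebras. Hence an equation $u \approx v$ between $\Sigma$-terms holds in ${\cal V}$ iff it holds in every simple algebra $\A \in {\cal V}$ under every valuation. Both required properties of $\sigma$ (being a unifier, and being more general than any unifier) are equations, so I will verify each one pointwise in an arbitrary simple $\A$ under an arbitrary valuation $\beta$, splitting on whether $p$ and $q$ agree under $\beta$.

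\emph{Step 1: $\sigma$ is a unifier.} Fix a simple $\A$ and a valuation $\beta$, and write $\overline{a} = \beta(\overline{x})$.
\begin{itemize}
\item If $p_\A(\overline{a}) = q_\A(\overline{a})$, then $\sigma(x_i)$ evaluates to $a_i$ for every $i$. So $p(\sigma(\overline{x}))$ and $q(\sigma(\overline{x}))$ evaluate to $p_\A(\overline{a}) = q_\A(\overline{a})$, and they agree.
\item Otherwise $\sigma(x_i)$ evaluates to $(t_i)_\A(\overline{a})$ for every $i$. Then $p(\sigma(\overline{x}))$ evaluates to $p_\A(\overline{t}_\A(\overline{a}))$, which equals $q_\A(\overline{t}_\A(\overline{a}))$ because ${\cal V} \models p(\overline{t}) \approx q(\overline{t})$ and $\A \in {\cal V}$.
\end{itemize}
The key subtlety is that in both cases all $n$ components fall into the same branch simultaneously. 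This is because the condition $p = q$ is the same for every $i$ (the whole tuple is switched at once), so no mixed tuples arise. Hence ${\cal V} \models p(\sigma(\overline{x})) \approx q(\sigma(\overline{x}))$.

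\emph{Step 2: $\sigma$ is reproductive, hence most general.} Let $\tau$ be any unifier, so ${\cal V} \models p(\tau(\overline{x})) \approx q(\tau(\overline{x}))$. Take $\delta = \tau$; I claim ${\cal V} \models \tau(\sigma(x_i)) \approx \tau(x_i)$. Note that
\[ \tau(\sigma(x_i)) = s(p(\tau(\overline{x})), q(\tau(\overline{x})), \tau(x_i), t_i(\tau(\overline{x}))). \]
In any simple $\A$ under any $\beta$, the first two arguments evaluate to the same element because $\tau$ is a unifier and $\A \in {\cal V}$. The switching property therefore returns the value of $\tau(x_i)$. This holds in all simple algebras, hence in ${\cal V}$.

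The only step needing care, and the main obstacle, is the justification that validity in ${\cal V}$ can be checked on simple algebras where $s$ genuinely switches. I will take this from the cited facts. First, a discriminator variety is generated by a class $K$ on which $t$ is a discriminator, and equations valid in $K$ are valid in $HSP(K)$. Second, the generating class may be taken to be $K$ itself, so I never need to invoke simplicity beyond the hypothesis that $s$ switches on members of $K$. The proof can thus be phrased with $K$ in place of "simple algebras", avoiding any subdirect-representation argument.
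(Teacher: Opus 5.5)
Your proof is correct. Its core matches the paper's: a case split inside a simple algebra on whether $p$ and $q$ agree, with all components switching together, and then the choice $\delta=\tau$. In that second step, $\tau(\sigma(x_i)) = s(\tau(p),\tau(q),\tau(x_i),\tau(t_i))$ collapses to $\tau(x_i)$ because $\tau(p)$ and $\tau(q)$ coincide.

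\textbf{How the two proofs differ.} The difference is how the simple-algebra computation is lifted to all of ${\cal V}$.
\begin{itemize}
\item You work at the level of identities. The simple algebras generate ${\cal V}$, and $H$, $S$, $P$ preserve identities, so it is enough to check each identity in simple algebras.
\item The paper, following Burris in the proof of Theorem~\ref{discr-constants}, fixes an arbitrary $\A \in {\cal V}$ and a valuation. It passes to each quotient $\A/\rho$ by a maximal congruence, which is simple, so $s$ switches there. It then recovers equality in $\A$ from the fact that the maximal congruences of an algebra in a discriminator variety separate points.
\end{itemize}
Your route is more elementary: it needs only the generation fact and Birkhoff-style preservation of identities, and it avoids semisimplicity. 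The paper's route instead gives a local picture of what $\sigma$ does in an arbitrary, possibly non-simple algebra: modulo each $\rho$ it selects $x_i$ where $p$ and $q$ agree and $t_i$ elsewhere. The paper reuses this for the reproductive-solution statement about $S_{\A,\beta}$ in Theorem~\ref{discr-constants}(ii). Your approach gets that part cheaply as well, since the identity $s(x,x,u,v)\approx u$ holds in all simple algebras and hence in ${\cal V}$.

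\textbf{Small bookkeeping points.}
\begin{itemize}
\item The $t_i$ may contain variables other than $x_1,\dots,x_n$. Write $\A(\beta)(t_i)$ and $\tau(t_i)$ instead of $(t_i)_\A(\overline a)$ and $t_i(\tau(\overline x))$; nothing else in the argument changes.
\item For variables outside $\{x_1,\dots,x_n\}$, the condition $\tau(\sigma(x))\approx\tau(x)$ is trivial because $\sigma$ fixes them.
\item In your last paragraph, if you replace the simple algebras by the generating class $K$, the hypothesis only says that $s$ switches on simple algebras. So your claim that you "never need to invoke simplicity" is slightly off. You need the one-line observation that an algebra with a discriminator term is simple or trivial: if $a\neq b$ are $\theta$-related, then $c = t(a,a,c)\mathrel{\theta} t(a,b,c) = a$ for every $c$, so $\theta$ is total. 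Your main line, via the simple algebras directly, does not need this.
\end{itemize}
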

The proof given in \cite{Burris92} relies on the following facts: 
\begin{itemize}
\item Any discriminator variety ${\cal V}$ is generated by the class
${\cal V}_S$
of its simple algebras. 
\item For any discriminator variety ${\cal V}$, every algebra ${\cal A}
\in {\cal V}$ 
and for every two terms $a, b \in |\A|$  we have 
$a = b$ iff ($(a, b) \in \theta$ for every maximal congruence
$\theta$ on $\A$).
\end{itemize} 
Using the fact that in the presence of a discriminator 
term every conjunction of atoms can 
be equivalently rewritten (w.r.t.\ the class of nontrivial 
simple algebras of ${\cal V}$) as an equality between terms, it is 
concluded that discriminator varieties have unitary unification \cite{Burris92}.

\medskip 
\noindent 
We show that the proof of Theorem~\ref{burris} given in
\cite{Burris92} can be adapted to unification 
problems of the form 
\begin{equation}
\exists x_1 \dots \exists x_n \, (p(x_1, \dots,x_n, {\overline y}) \approx q(x_1, \dots, x_n, {\overline
  y})) \label{eq3}
\end{equation}
in which the variables ${\overline y} = y_1, \dots, y_m$ are regarded as ``parameters''
(problems of type~(\ref{eq3}) can be regarded as unification
with constants).
\begin{thm}
Let ${\cal V}$ be a discriminator variety with switching term $s(x, y,
u, v)$ on its simple algebras. Consider the unification
problem~(\ref{eq3}). Assume that it has a solution and let
$t_1, \dots, t_n$ be terms 
such that 
$$ {\cal V} \models \forall {\overline y} \, (p(t_1, \dots,
t_n, {\overline  y}) \approx q(t_1, \dots, t_n,{\overline  y})).$$ 
Then the following hold (where ${\overline x}$ is $x_1, \dots, x_n$
and ${\overline y}$ is $y_1, \dots, y_m$):  
\begin{itemize}
\item[(i)] The substitution defined by 
$\sigma(x_i) = s(p({\overline x}, {\overline y}), q({\overline x},
{\overline y}), x_i, t_i)$ for $1 \leq i \leq n$, and 
$\sigma(y) = y$ for every other variable, 
is a most general unifier.  

\item[(ii)] For every algebra $\A \in {\cal V}$ and every $\beta : \{
  y_1, \dots, y_m \} \rightarrow \A$ let 

\medskip
\quad $\begin{array}{rcl} 
S_{\A, \beta} & = &  \{ {\overline
    a} \in |\A|^n \mid \A(\beta^{\overline a})(p) =
\A(\beta^{\overline a})(q) \} \\
& = & \{ {\overline
    a} \in |\A|^n \mid p_{\A}({\overline a}, \beta(y_1), \dots,
  \beta(y_m)) = q_{\A}({\overline a}, \beta(y_1), \dots,
  \beta(y_m)) \},
\end{array}$ 

\medskip
\noindent and let $F : |\A|^n \rightarrow |\A|^n$ be defined
  for all ${\overline a} = (a_1, \dots, a_n) \in |\A|^n$ by: 

\medskip
\quad $ F(a_1, \dots, a_n) = (\A({\beta}^{\overline a})(\sigma(x_1)), \dots,
\A({\beta}^{\overline a})(\sigma(x_n))),$

\medskip
\noindent where $\beta^{\overline a}(y_i) = \beta(y_i)$ for all $1 \leq i \leq
m$ and $\beta^{\overline a}(x_i) = a_i$ for all $1 \leq i \leq n$. 

Then the following hold: 
\begin{itemize}
\item[(a)] $F(|\A|^n) \subseteq  S_{\A, \beta}$, and 
\item[(b)] for every
$(a_1, \dots, a_n) \in S_{\A, \beta}$ we have $F(a_1, \dots, a_n) =
(a_1, \dots, a_n)$. 
\end{itemize}
Hence, $F(|\A|^n) = S_{\A, \beta}$. 
\end{itemize}
\label{discr-constants}
\end{thm}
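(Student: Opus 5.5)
The approach follows Burris' argument for Theorem~\ref{burris}, now carrying the parameters ${\overline y}$ along as free variables that $\sigma$ fixes. The two structural facts recalled above will be used: ${\cal V}$ is generated by its simple algebras ${\cal V}_S$, and $s$ is a switching term on each of them. Since an equation holds in ${\cal V} = HSP({\cal V}_S)$ iff it holds in every simple algebra, every identity we need will be checked pointwise in simple algebras $\A \in {\cal V}_S$ under arbitrary valuations.

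\emph{Step 1 ($\sigma$ is a unifier).} Fix a simple $\A$ and a valuation $\beta$ of ${\overline x},{\overline y}$, and let $a_i=\beta(x_i)$. There are two cases.
If $p_\A({\overline a},\beta({\overline y})) = q_\A({\overline a},\beta({\overline y}))$, then $s$ returns $x_i$, so $\sigma(x_i)$ evaluates to $a_i$ and $p(\sigma{\overline x},{\overline y}) = q(\sigma{\overline x},{\overline y})$ holds by the case assumption.
Otherwise $\sigma(x_i)$ evaluates to $t_i$ under $\beta$. Then $p$ and $q$ agree by the hypothesis ${\cal V}\models\forall{\overline y}\,(p(\overline t,{\overline y})\approx q(\overline t,{\overline y}))$, instantiated at this valuation; this also covers the case where the $t_i$ contain variables.
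Hence ${\cal V}_S \models p(\sigma{\overline x},{\overline y})\approx q(\sigma{\overline x},{\overline y})$, and so ${\cal V}$ satisfies it as well.

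\emph{Step 2 (most generality).} Let $\sigma'$ be any unifier, with $\sigma'(y)=y$ on parameters, as a unification-with-constants solution requires. Take $\delta=\sigma'$. Then $\delta(\sigma(x_i)) = s(p(\sigma'{\overline x},{\overline y}), q(\sigma'{\overline x},{\overline y}), \sigma'(x_i), \delta(t_i))$. Because $\sigma'$ unifies, in every simple algebra and under every valuation the first two arguments coincide, so $s$ returns $\sigma'(x_i)$. Thus ${\cal V}_S\models \delta\sigma(x_i)\approx\sigma'(x_i)$, and this lifts to ${\cal V}$. On the other variables $\delta\sigma(y)=\sigma'(y)$ holds trivially.

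The main obstacle I anticipate is the treatment of variables occurring in $t_i$ and of parameters. The hypothesis must be used for all values of ${\overline y}$, which is exactly why it is stated with $\forall{\overline y}$. One must also make sure that $\delta$ acting on the variables of $t_i$ does not break the identity. This is harmless because those occurrences sit in the ``else'' branch, which is never selected once the arguments are unified.

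\emph{Step 3 (part (ii)).} Here $\A\in{\cal V}$ is arbitrary and need not be simple, so the pointwise case split is unavailable. Instead I will use the identities already established in ${\cal V}$.
For (a), Step~1 gives the identity $p(\sigma{\overline x},{\overline y})\approx q(\sigma{\overline x},{\overline y})$ in ${\cal V}$, hence in $\A$. Evaluating it under $\beta^{\overline a}$ shows $F({\overline a})\in S_{\A,\beta}$.
For (b), I will use the identity $s(x,x,u,v)\approx u$. It holds in all simple algebras, hence in ${\cal V}$. If ${\overline a}\in S_{\A,\beta}$, the first two arguments of $s$ in $\sigma(x_i)$ evaluate to the same element, so the identity gives $F({\overline a})_i=a_i$.
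Combining (a) and (b) yields $F(|\A|^n)=S_{\A,\beta}$.
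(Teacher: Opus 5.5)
Your proof is correct, and its skeleton matches the paper's: the two-case evaluation of the switching term, the choice $\delta=\sigma'$ for most generality, and (ii)(a) obtained by instantiating the unifier identity at $\beta^{\overline a}$.

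The difference lies in how you pass from simple algebras to all of ${\cal V}$. The paper works inside an arbitrary $\A\in{\cal V}$. It runs the case split in each simple quotient $\A/\rho$ by a maximal congruence $\rho$. It then concludes using the fact that maximal congruences separate the elements of $\A$, and it uses the same device again for (ii)(b).

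You instead check each required equation only in the simple algebras, and lift it to ${\cal V}=HSP({\cal V}_S)$ by preservation of identities under $H$, $S$, $P$. For (ii)(b) you isolate the single identity $s(x,x,u,v)\approx u$ and apply it in $\A$. This suffices because $\overline a\in S_{\A,\beta}$ gives exact equality of the first two arguments of $s$, not merely equality modulo each maximal congruence.

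What each route buys:
\begin{itemize}
\item Your route is more elementary. It needs only that $s$ is switching on a generating class, not the semisimple (subdirect) structure of the algebras in ${\cal V}$. It is also closer in spirit to the later Theorem~\ref{reprod-sol-qe}, where everything is checked in the single model $\A'$.
\item The paper's route describes directly how $\sigma$ acts in a non-simple $\A$. Modulo each maximal congruence, $\sigma(x_i)$ is either $x_i$ itself or the particular solution $t_i$.
\end{itemize}

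A minor point: the restriction $\sigma'(y)=y$ in your Step 2 is not needed. The choice $\delta=\sigma'$ works verbatim for any unifier, as in the paper.
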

{\em Proof:} 
(i) Let $\A \in {\cal V}$ and $\beta : X \rightarrow \A$. 
Let $\rho$ be a maximal congruence on $\A$. Then in 
the simple algebra $\A/\rho$ the term $s(x, y, u, v)$ is a
switching term, so we have (we denote by $a/\rho$ the equivalence
class of $a$ in $\A/\rho$): 

\smallskip
$\A(\beta)(\sigma(x_i))/{\rho} = \left\{ \begin{array}{ll}
\A(\beta)(x_i)/{\rho} & \text{ if } \A(\beta)(p({\overline
  x}, {\overline y}))/\rho = \A(\beta)(q({\overline x}, {\overline y}))/\rho \\
\A(\beta)(t_i)/{\rho} & \text{ otherwise }
\end{array} \right.$

\smallskip
\noindent Then, for every  maximal congruence $\rho$ on $\A$: 

\smallskip
$\A(\beta)(\sigma(p))/{\rho}  =  \A(\beta)(p(\sigma(x_1), \dots, \sigma(x_n),
{\overline y}))/{\rho}$ 

\nopagebreak
$\begin{array}{lcl}
~~~~~~~& = & p_{\A/\rho}(\A(\beta)(\sigma(x_1))/{\rho}, \dots, \A(\beta)(\sigma(x_n))/{\rho},
\A(\beta)(y_1)/{\rho}, \A(\beta)(y_m)/{\rho}) \\[0.5ex]
& = & \left\{ \begin{array}{ll}
\A(\beta)(p(x_1, \dots, x_n, {\overline
  y}))/\rho  & \text{ if } \A(\beta)(p({\overline x}, {\overline y}))/{\rho} =
\A(\beta)(q({\overline x}, {\overline y}))/{\rho} \\
\A(\beta)(p(t_1, \dots, t_n, {\overline y}))/\rho & \text{ otherwise }
\end{array} \right. \\[3ex]
& = &  \left\{ \begin{array}{ll}
\A(\beta)(q(x_1, \dots, x_n, {\overline
  y}))/\rho  & \text{ if } \A(\beta)(p({\overline x}, {\overline y}))/{\rho} =
\A(\beta)(q({\overline x}, {\overline y}))/{\rho} \\
\A(\beta)(q(t_1, \dots, t_n, {\overline y}))/\rho & \text{ otherwise }
\end{array} \right. \\[3ex]
& = & \A(\beta)(\sigma(q))/{\rho}. 
\end{array}$

\smallskip
\noindent So $\A(\beta)(\sigma(p))/\rho = \A(\beta)(\sigma(q))/\rho$ for every 
maximal congruence $\rho$ on $\A$. 

\noindent It follows that for every $\A \in {\cal V}$ and every $\beta: X
\rightarrow \A$ we have $\A(\beta)(\sigma(p)) = \A(\beta)(\sigma(q))$, 
so ${\cal V} \models \sigma(p) = \sigma(q)$. 

\smallskip
\noindent 
Let $\mu : X \rightarrow T_{\Sigma}(X)$ be another solution of
problem $(\ref{eq3})$. 
Then ${\cal V} \models \mu(p) \approx \mu(q)$, 
i.e.\ for every algebra $\A \in {\cal V}$ 
and every $\beta : X \rightarrow \A$ we have 
$\A(\beta)(\mu(p)) = \A(\beta)(\mu(q))$. 

\noindent Then $\mu(\sigma(y)) = \mu(y)$ for all $y \in
X \backslash \{ x_1, \dots, x_n \}$, and for all $1 \leq i \leq n$ we have 

\smallskip
$\mu(\sigma(x_i)) = \mu(s(p, q, x_i, t_i))  = s(\mu(p), \mu(q),
\mu(x_i), \mu(t_i)) $, 

\smallskip
\noindent so, for every $\A \in {\cal V}$, for every valuation $\beta: X
\rightarrow \A$, and every maximal congruence $\rho$ on $\A$ we have: 

\smallskip
$\A(\beta)(\mu(\sigma(x_i)))/\rho = \A(\beta)(s(\mu(p), \mu(q),
\mu(x_i), \mu(t_i)))/\rho = \A(\beta)(\mu(x_i))/\rho$.

\smallskip
\noindent Therefore, ${\cal V} \models \mu(\sigma(x_i)) = \mu(x_i)$. 
Since for all the other variables $y$ we have $\sigma(y) = y$, 
${\cal V} \models \forall x (\mu(\sigma(x)) = \mu(x))$. 

\medskip
\noindent (ii) (a) Let $\A$ and $\Sigma$ be as in the statement of
(ii). Let $(a_1, \dots, a_n) \in |\A|^n$. 
We know that ${\cal V} \models \sigma(p) \approx \sigma(q)$, i.e.\
$${\cal V} \models p(\sigma(x_1), \dots, \sigma(x_n), y_1, \dots, y_m)
\approx  q(\sigma(x_1), \dots, \sigma(x_n), y_1, \dots, y_m).$$ 
Therefore, $\A(\beta^{\overline a})(p(\sigma(x_1), \dots,
\sigma(x_n), {\overline y}))
=  \A(\beta^{\overline a})(q(\sigma(x_1), \dots, \sigma(x_n),
{\overline y}))$. This means that  
$F(a_1, \dots, a_n) = (\A(\beta^{\overline a})(\sigma(x_1)), \dots,
\A(\beta^{\overline a})(\sigma(x_n))) \in S_{\A, \beta}$.

\smallskip
\noindent Conversely, let  $(a_1, \dots, a_n) \in
S_{\A, \beta}$. 
Then $\A(\beta^{\overline a})(p) = \A(\beta^{\overline  a})(q)$, i.e.:  

\smallskip
$p_{\A}(a_1, \dots, a_n, \beta(y_1), \dots, \beta(y_n)) = q_{\A}(a_1, \dots,
a_n, \beta(y_1), \dots, \beta(y_n)).$

\smallskip
\noindent We know that $F(a_1, \dots, a_n)  =  (\A({\beta}^{\overline a})(\sigma(x_1)), \dots,
\A({\beta}^{\overline a})(\sigma(x_n)))$.
We show that for every maximal congruence 
$\rho$ on $\A$,  
we have 

$(\A({\beta}^{\overline a})(\sigma(x_1))/\rho, \dots,
\A({\beta}^{\overline a})(\sigma(x_n))/\rho) = (a_1/\rho, \dots, a_n/\rho)$.

\medskip
\noindent Indeed, $(\A({\beta}^{\overline a})(\sigma(x_1))/\rho, \dots,
\A({\beta}^{\overline a})(\sigma(x_n))/\rho) = $ 

$\begin{array}{rcl}
& =  & \left\{ \begin{array}{ll}
(\A(\beta)(x_1)/\rho, \dots, \A(\beta)(x_n)/\rho) & \text{ if }
  \A(\beta^{\overline a})(p) /\rho  = \A(\beta^{\overline a})(p) /\rho \\
(\A(\beta)(t_1)/\rho, \dots, \A(\beta)(t_n))/\rho & \text{ otherwise }
\end{array} \right. \\
& =  & (a_1/\rho, \dots, a_n/\rho)
\end{array}$

\smallskip
\noindent It follows therefore that $F(a_1, \dots, a_n) = (a_1, \dots, a_n)$.
\QED
\begin{defi}
Let ${\cal V}$ be a variety. Consider the unification
problem~(\ref{eq3}). Let $X$ be a set of variables containing all 
variables occurring in the unification problem (\ref{eq3}).
Let $\A \in {\cal V}$, and $\beta : X \rightarrow \A$, 
and let $S_{\A, \beta}$ be the set of solutions of (\ref{eq3})
as defined in Theorem~\ref{discr-constants}(ii). 
We say that a function $F : |\A|^n \rightarrow |\A|^n$ is a {\em
  reproductive solution} in $\A$ w.r.t.\ $\beta$ if it has the 
property 
that 
\begin{itemize}
\item[(a)] $F(|\A|^n) \subseteq S_{\A, \beta}$, and 
\item[(b)] for every 
${\overline a} \in S_{\A, \beta}$ we have $F({\overline a}) =
{\overline a}$. 
\end{itemize}
\end{defi}
This notion of reproductive solution generalizes the definitions of 
reproductive solution for Boolean algebras \cite{Rudeanu} 
and for primal algebras \cite{Nipkow}. Theorem~\ref{discr-constants} shows 
how most general unifiers and reproductive solutions can be
constructed for discriminator varieties.

\section{If-then-else operations}
\label{sect:ite}
We would like to prove results similar to those established for discriminator
varieties for other theories. We analyze problems of the form 

\medskip
\quad \quad \quad \quad $\exists x_1 \dots x_n \phi(x_1, \dots, x_n, y_1, \dots, y_m)$, \hfill (\ref{eq1})

\medskip
\noindent where $\phi$ is a quantifier-free formula with free variables
$x_1, \dots, x_n, y_1, \dots, y_m$ and  analyze the existence of most
general solutions w.r.t.\ a theory $\T$. 

\noindent In general, discriminator terms do not 
exist, even for selected models of the theory $\T$. 
We will use 
 ``{\sf if}-{\sf then}-{\sf else}''-operations which behave like the
 switching terms; usually such operations are easier to evaluate in certain
models.  
\begin{ex}
Let ${\mathbb R}[X_1, \dots, X_n]$ be the ring of polynomials in 
indeterminates $X_1, \dots, X_n$. Clearly, two polynomials $p$ and $q$
are equal iff they have the same coefficients iff the associated
functions are equal. Therefore, a ``switching operation'' 
which tests the equality between polynomials (as formal
objects) would be difficult to evaluate, because it would require a
form of universal quantification. 
 
However, for all $a_1, \dots, a_n \in {\mathbb R}$ 
we can consider the maximal congruence on 
${\mathbb R}[X_1, \dots, X_n]$ described by the ideal 
$(X_1 - a_1, \dots, X_n - a_n)$. 
The quotient ring 
${\mathbb R}[X_1, \dots, X_n]/(X_1 - a_1, \dots, X_n - a_n)$ is simple 
(i.e.\ a simple algebra). 

Since the ideal $(X_1 - a_1, \dots, X_n - a_n)$ is the kernel of
the evaluation map ${\sf ev}$ which maps every polynomial $p$ into 
the value $p(a_1, \dots, a_n)$ and the equivalence
class $[p]$ of a polynomial $p$ in ${\mathbb R}[X_1, \dots,
X_n]/(X_1 - a_1, \dots, X_n - a_n)$ can be identified with the value
$p(a_1, \dots, a_n)$, there is a ring isomorphism: 
$${\mathbb R}[X_1, \dots, X_n]/(X_1 - a_1, \dots, X_n - a_n) \simeq
{\mathbb R}.$$ 
\noindent The ${\sf ite}_{\approx}$ operation on ${\mathbb R}[X_1, \dots,
X_n]/(X_1 - a_1, \dots, X_n - a_n)$ can be defined by: 

\smallskip
${\sf ite}_{\approx}(p(a_1, \dots, a_n), q(a_1, \dots, a_n), c, d) = \left\{ \begin{array}{ll}
c & \text{ if } p(a_1, \dots, a_n) = q(a_1, \dots, a_n) \\
d & \text{ otherwise} 
\end{array} \right.$ 

\smallskip
\noindent and can be regarded as an operation in (the simple ring)
${\mathbb R}$.
\hfill $\blacksquare$
\label{ex:polynomials}
\end{ex}
\begin{rem}
We can define operations ${\sf ite}_{\approx}$ and ${\sf ite}_{\leq}$
also on the ring of polynomials, by 
$${\sf ite}_{\approx}(p, q, r, s) = \left\{ \begin{array}{ll}
r & \text{ if } p = q \\
s & \text{ if } p \neq q \end{array}\right. \quad  {\sf ite}_{\approx}(p, q, r, s) = \left\{ \begin{array}{ll}
r & \text{ if } p \leq q \\
s & \text{ if } p \not\leq q \end{array}\right.$$
However, since $p = q$ iff the cofficients of $p$ and $q$ coincide 
iff $p(a_1, \dots, a_n) = q(a_1, \dots, a_n)$ for all $a_1,\dots,a_n
\in {\mathbb R}$; and 
$p \leq q$ iff  $p(a_1, \dots, a_n) \leq q(a_1, \dots, a_n)$ for all
$a_1,\dots,a_n \in {\mathbb R}$, these operations are in general 
more difficult to evaluate than the corresponding operations 
on ${\mathbb R}$ described in Example~\ref{ex:polynomials}.  
\end{rem}
Our goal is to obtain (possibly most general) solutions for 
problems of the form 
$$ \exists x_1, \dots \exists x_n \phi(x_1, \dots, x_n, {\overline
  y}),$$
i.e.\ find terms $t_1, \dots, t_n$ with the property that 
$$  \T \models  \phi(t_1, \dots, t_n, {\overline
  y}) \text{ or } \T \models \psi({\overline y}) \rightarrow \phi(t_1, \dots, t_n, {\overline
  y}),$$
where $\psi$ is a formula representing a condition under which solutions exist.
In many cases in the presence of parameters, we are not 
likely to find terms in the signature of the theory which are solutions for formulae of
form~(\ref{eq1}). 
\begin{ex} Let $\T = LI({\mathbb R})$, the theory of linear real
  arithmetic. Consider the problem $ \exists x (x \geq x_1 \wedge x
  \geq x_2)$. 
Clearly, $LI({\mathbb R}) \models \exists x (x \geq x_1 \wedge x
  \geq x_2)$, but one cannot find a term $t$
built in the signature of linear real arithmetic such that 

\medskip
~~~~~~~~~~~~~~~~~~~~~~~~$LI({\mathbb R}) \models t \geq x_1  \wedge t \geq x_2.$

\medskip
\noindent In the presence of {\sf if}-{\sf then}-{\sf else}-operations
on ${\mathbb R}$,
we can encode definitions by case distinctions as terms, and can
represent a solution as a term over an extended signature containing 
an {\sf if}-{\sf then}-{\sf else} construction: 

\smallskip
$ t = \ite{x_1 \geq x_2}{x_1}{x_2} = \left\{ \begin{array}{ll} x_1 & \text{ if } x_1 \geq x_2 \\
x_2 & \text{ otherwise.} \end{array} \right. $ 
\hfill $\blacksquare$
\end{ex}
Thus, if we can define ``{\sf if}-{\sf then}-{\sf else}'' constructions on suitable models 
of the theories, we are likely to describe the 
solutions and the most general solutions for problems of the form~(\ref{eq1}) using
terms over this extended signature.  

\bigskip
\noindent Let $\T$ be a theory with signature $\Pi = (\Sigma, {\sf
  Pred})$, and ${\cal F}$ a subset of the quantifier-free $\Pi$-formulae. 
In what follows we consider the following assumptions: 

\begin{description}
\item[(A1)] 
There exists a model $\A$ of $\T$ with 
the property that for every quantifier-free formula $\phi({\overline
  x}, {\overline y}) \in {\cal F}$ with free variables
${\overline x}, {\overline y}$ we have 
$\T \models \forall {\overline y} \exists {\overline x}\, \phi({\overline x}, {\overline y})$
iff $\A \models \forall  {\overline y} \exists {\overline x}\,
\phi({\overline x}, {\overline y})$.

\item[(A2)] $\T$ allows elimination of existential quantifiers for all
  formulae in  ${\cal F}$ (i.e.\ for every $\phi(x, {\overline y}) \in
  {\cal F}$ there exists a quantifier-free formula $\psi({\overline y})
  \in {\cal F}$ such
  that $\exists x \, \phi(x, {\overline y}) \equiv_{\T}
  \psi({\overline y})$).   
\end{description}
\begin{rem}
If (A1) holds, then for every $m$-ary 
predicate symbol  $p \in {\sf Pred}$ we can define 
an ${\sf ite}_p$ function on $\A$ for all $a_1, \dots,
a_m, a, b \in \A$ by:  
$${\sf ite}_p(a_1, \dots, a_m, a, b) = \left\{ \begin{array}{ll}
a & \text{ if } p_{\A}(a_1, \dots, a_m) \text{ is true }  \\
b &  \text{ if } p_{\A}(a_1, \dots, a_m) \text{ is false. } \end{array} \right.$$
The interpretation of ${\sf ite}_{\approx}(s_1, s_2, t_1, t_2)$
is the same as the semantics of a switching term. In (A1), $\A$ is usually chosen s.t.\
$p_{\A}(a_1, \dots, a_m)$ is easy to evaluate.
\end{rem}
\begin{ex}
We give some examples of theories $\T$ and formula classes ${\cal F}$
which satisfy the assumptions above: 
\begin{description}
\item[{\bf Bool}:] Let ${\sf Bool}$ be the theory of Boolean algebras, and ${\cal
    F}$ the class of all formulae in the signature of ${\sf Bool}$ of
  the form $\phi({\overline x}) = p({\overline x}) \approx
  q({\overline x}).$

The following hold:
\begin{itemize}
\smallskip
\item  ${\sf Bool} \models
\forall {\overline y} \exists {\overline x} \,(p({\overline x},
  {\overline y}) \approx q({\overline x}, {\overline y}))$ iff $B_2 \models
\forall {\overline y} \exists {\overline x} \, (p({\overline x},
  {\overline y}) \approx q({\overline x}, {\overline y}))$, where $B_2$ is the 2-element Boolean algebra (cf.\
  e.g.\ \cite{Baader98}). 
\item  The switching term satisfies the conditions of ${\sf
    ite}_{\approx}$. 
\item ${\sf Bool}$ allows quantifier elimination for formulae
in ${\cal F}$: Indeed every equation of form $p(x, {\overline y})
\approx q(x, {\overline y})$ can be written in the form 
$(a \wedge x) \vee (b \wedge \neg x) \vee c \approx 0$, where 
$a, b, c$ are terms containing the variables ${\overline y}$, and 
$\exists x ((a \wedge x) \vee (b \wedge \neg x) \vee c \approx 0)
\equiv_{\sf Bool} (a \wedge b) \vee c \approx 0 \in {\cal F}.$
\end{itemize}
\medskip
\item[{\bf LI}$({\mathbb R})$:] Let $\A = {\mathbb R}$, the model of
  $LI({\mathbb R})$
having as support the set of real numbers and the usual interpretation
for the function and predicate symbols and ${\cal F}$ be the class of 
all quantifier-free formulae in the signature of $LI({\mathbb R})$.
Then assumption (A1)
holds, because for every quantifier-free formula $\phi$:
$LI({\mathbb R}) \models \forall {\overline y} \exists {\overline x}
\phi({\overline x}, {\overline y})$ iff ${\mathbb R}
\models \forall {\overline y} \exists {\overline y} \phi({\overline x}, {\overline y})$. 
Since linear real arithmetic allows
quantifier elimination, assumption (A2) holds as well. 
The operations ${\sf ite}_{\approx}, {\sf ite}_{\leq}$ on ${\mathbb R}$
are defined, for every $a, b, c,
d \in {\mathbb R}$ by: 
$$\begin{array}{ll}
{\sf ite}_{\approx}(a, b, c, d) := \left\{ \begin{array}{ll}
c & \text{ if } a = b \\
d & \text{ if } a \neq b \end{array} \right. & \quad \quad {\sf ite}_{\leq}(a, b, c, d) := \left\{ \begin{array}{ll}
c & \text{ if } a \leq b \\
d & \text{ if } a > b. \end{array} \right.
\end{array}$$
\item[Theory of real closed fields:] We can choose
  $\A = {\mathbb R}$, the field of real numbers and ${\cal F}$ the set
  of all formulae in the signature of real closed fields. Then
  assumption (A1) holds, and since the theory of real closed fields 
  allows quantifier elimination assumption (A2) holds as well. \hfill $\blacksquare$
\end{description}
\label{ex:a1-a2}
\end{ex}
Let $\T$ be a theory satisfying assumption (A1). 
Then for every 
$m$-ary predicate symbol $p$ we can define 
a function ${\sf ite}_{\neg p}$ by 
${\sf ite}_{\neg p}({\overline x}, y, z) = {\sf
  ite}_p({\overline x}, z, y)$.  

For every conjunction of literals 
$\phi = p_1({\overline x}_1) \wedge \dots \wedge p_k({\overline
  x}_k)$, where $p_i \in \{ q_i, \neg q_i \}$, 
we can define a construction of the form 
$\ite{\phi}{y}{z}$ by:  

\smallskip
$\displaystyle{\ite{\bigwedge_{i = 1}^n p_i({\overline x}_i)}{y}{z} := 
{\sf ite}_{p_1}({\overline x}_1, {\sf ite}_{p_2}({\overline x}_2, \dots, 
{\sf ite}_{p_k}({\overline x}_k, y, z), \dots z), z).}$
\begin{defi}
Let $\T$ be a theory with signature $\Pi = (\Sigma, {\sf Pred})$ 
satisfying assumption (A1), let $\Sigma_{\sf ite} = \{ {\sf ite}_p
\mid p \in {\sf Pred} \}0$, let 
$\Sigma' = \Sigma \cup \Sigma_{\sf ite}$ and $\Pi' =
(\Sigma', {\sf Pred})$. 
Let $\A$ be the model in assumption (A1), let $\A'$ be the expansion 
of $\A$ with operations in $\Sigma_{\sf ite}$, and let $\T'$ be the
theory consisting of all
consequences of $\A'$ (a theory with signature $\Sigma'$). 
We refer to $\A'$ as the {\em {\sf ite}-expansion  of $\A$} and to 
$\T'$ 
as the {\em {\sf  ite}-expansion of $\T$}. 
\label{defi:tprim}
\end{defi}

\section{Solutions and most general solutions}
\label{sect:solutions-mgs}

We now define the notions of solution and of most general 
solution for a formula of the form: 

\medskip
$\exists x_1, \dots, x_n \phi(x_1, \dots, x_n, y_1, \dots, y_m), \hfill (\ref{eq1})$

\medskip
\noindent and identify situations in which the results established in
Theorem~\ref{discr-constants} for discriminator varieties 
can be generalized to more general theories.

Let $\T$ be a theory with signature $\Pi$ and let ${\cal F}$ be a
subset of the quantifier-free $\Pi$-formulae such that 
assumptions (A1) and (A2) hold. Let $\T', \Sigma', \Pi'$ and $\A'$ 
be as defined in Definition~\ref{defi:tprim}. We 
consider the following additional property: 
\begin{description}
\item[(A3)] For every $\phi(x_1, \dots, x_n, {\overline y}) \in {\cal
    F}$, if $\psi({\overline y}) \equiv_{\T} \exists x_1 \dots \exists x_n
  \phi(x_1, \dots, x_n, {\overline y})$ is a quantifier-free formula 
which exists by assumption (A2), 
 there exist $\Sigma'$-terms $t_1, \dots, t_n$ such that 
   $\T' \models (\psi({\overline y}) \rightarrow \phi(t_1, \dots, t_n,
   {\overline y}))$.
\end{description}
In what follows we restrict, for the sake of simplicity (w.l.o.g.), to problems
of form~(\ref{eq1}) for 
quantifier-free formulae $\phi({\overline x},
  {\overline y})$ which are
{\em conjunctions of literals}.  

\subsection{Solution of a parametric existential formula}

We propose the notion of solution and conditional solution for problems of the
form~(\ref{eq1}) w.r.t.\ a theory $\T$.

\begin{defi}[Solution]
A solution for a problem of the form~(\ref{eq1}) w.r.t.\ a theory $\T$ is a tuple of $\Pi$-terms 
$t_1, \dots, t_n$ such that ${\cal T} \models  \phi(t_1, \dots, t_n,
   {\overline y})$, or equivalently a substitution $\sigma : X
   \rightarrow T_{\Pi}(X)$ 
with $\sigma(x_i) = t_i$ for all $1 \leq i \leq n$ and $\sigma(y) =
y$ for all the other variables $y \in X$  such that ${\cal T} \models \sigma(\phi)$.

\smallskip
Assume that $\T$ satisfies condition (A1), and let 
$\Pi', \A'$ and $\T'$ be as defined in Definition~\ref{defi:tprim}. 
A solution with {\sf if}-{\sf then}-{\sf else}-terms for  a problem of
the form~(\ref{eq1}) w.r.t.\ a theory $\T$ is a tuple of $\Pi'$-terms 
$t_1, \dots, t_n$ such that ${\cal T}' \models  \phi(t_1, \dots, t_n,
   {\overline y})$, or equivalently a substitution $\sigma : X
   \rightarrow T_{\Pi'}(X)$ 
with $\sigma(x_i) = t_i$ for all $1 \leq i \leq n$ and $\sigma(y) =
y$ for all the other variables $y \in X$ 
such that ${\cal T}' \models \sigma(\phi)$.
\end{defi}
Solutions might only exist in some cases. If the theory $\T$
allows quantifier elimination, then there exists a quantifier-free 
formula $\psi$ containing only the variables ${\overline y}$ such that 
$$\psi({\overline y}) \equiv_{\T} \exists x_1 \dots \exists x_n
  \phi(x_1, \dots, x_n, {\overline y}).$$
A {\em conditional solution} for a problem of the form~(\ref{eq1}) 
w.r.t.\ $\T$ is a substitution with the property that 
for every model $\A$ of $\T$ and every assignment $\beta$
such that $(\A, \beta) \models \psi({\overline y})$ we have 
$(\A, \beta) \models \sigma(\phi)$. 
\begin{defi}[Conditional solution]
A {\em conditional solution} for a problem of the form~(\ref{eq1}) 
w.r.t.\ a theory $\T$ is a substitution $\sigma : X \rightarrow T_{\Pi}(X)$ 
with $\sigma(y_i) = y_i$ for all $1 \leq i \leq m$  such that 
${\cal T} \models (\psi({\overline y}) \rightarrow \sigma(\phi))$.

A {\em conditional solution} with {\sf if}-{\sf then}-{\sf else}-terms
for a problem of the form~(\ref{eq1}) 
w.r.t.\ a theory $\T$ is a substitution 
$\sigma : X \rightarrow T_{\Pi'}(X)$ 
with $\sigma(y_i) = y_i$ for all $1 \leq i \leq m$ such that 
${\cal T}' \models (\psi({\overline y}) \rightarrow \sigma(\phi))$.

\end{defi}

\begin{ex} We present some examples: 
\begin{description}
\item[{\bf Bool}:] The theory of Boolean algebras is a special case,
  since it has a switching term. We can therefore choose $\T' = {\sf
    Bool}$. Let $\phi(x, {\overline y}) = (a \wedge x) \vee (b \wedge \neg
  x) \approx 0$, where $a$ and $b$ are
  terms in the variables ${\overline y}$. We know that 
  $$\exists x ((a \wedge x) \vee (b \wedge \neg
  x) \approx 0) \equiv_{\sf Bool} a \wedge b \approx 0.$$ 
This means that $\phi$ might not have a solution in general.

However, it has a conditional solution: If $a \wedge b \approx 0$ then
e.g.\ the term $t = b$ is a solution (or, equivalently, the
substitution $\sigma$ with $\sigma(x) = b$, $\sigma(y_i) = y_i$ for
all the variables $y_i$ in ${\overline y}$ is a solution).

We have: For every Boolean algebra
  $B$ and every $\beta : X \rightarrow B$, if $(B, \beta) \models a
  \wedge b \approx 0$ then $(B, \beta) \models (a \wedge t) \vee (b
  \wedge \neg  t) \approx 0$, so 
$${\sf Bool} \models (a \wedge b
  \approx 0 \rightarrow \phi(t, {\overline y})), \quad \text{i.e.}
  \quad {\sf Bool} \models (a \wedge b \approx 0 \rightarrow \sigma(\phi)).$$
\item[{\bf LI}$({\mathbb R})$:] Linear real arithmetic allows quantifier
  elimination. For this we can, for instance, use methods such as
  virtual substitution \cite{WeispfenningLoos}, which allows to 
  compute a finite set $T$ of testpoints such that: 
$$\exists x \phi(x, {\overline y}) \equiv \bigvee_{a \in T} \phi(a,
{\overline y}).$$
$T$ contains all terms $u_i({\overline y})$ such that $\phi$ contains 
$x \approx u_i({\overline y})$ or $x \geq u_i({\overline y})$, 
and in addition (i) a term denoted
with $- \infty$, (ii) terms of the form $u_i({\overline y}) + \varepsilon$. 
\begin{itemize}
\item Instead of $- \infty$ we can choose the
$\Sigma'$-expressible term 
$t_m - 1$, where 
$t_m = {\sf min}(\{ s_i({\overline y}) \mid x \bowtie s_1({\overline y}) \text{
  occurs in } \phi \})$ (here $\bowtie \in \{ {\leq}, {<}, {\geq}, {>}, {\approx}, {\not\approx}\}$). 
\item We can replace $\varepsilon$ in the terms $u_i({\overline y}) + \varepsilon$ with $e/2$, where \\
$e = {\sf min}(|s_i - s_j| \mid s_i \neq s_j, x \,{\bowtie}\, s_i, x \,{\bowtie} \,s_j \,\text{occur in }
\phi, \bowtie\, \in \{{\leq}, {<}, {\geq}, {>}, {\approx}, {\not\approx} \})$. 
\end{itemize}
If $T = \{ t_1, \dots, t_k \}$, a solution can be represented by the
$\Pi'$-term:

\smallskip
$ 
{\sf if}~\phi(t_1, {\overline y})~{\sf then}~ t_1~{\sf else} 
 ~({\sf if}~\phi(t_2, {\overline y})~{\sf then}~ t_2~{\sf else}( 
 \dots 
 ({\sf if}~\phi(t_k, {\overline y})~{\sf then}~ t_k~{\sf else}~c_f) \dots))$ 

\smallskip
\noindent where $c_f$ is a special constant which stands for ``no
solution''. 

\medskip
\item[Theory of real closed fields:] Finding 
  testpoints $t_1, \dots, t_n$ which can be used for obtaining solutions 
  depends on the method for
  quantifier elimination used (e.g.\ 
cylindrical algebraic decomposition computes a decomposition of
${\mathbb R}^n$ into regions; the method needs to be adapted to
generate suitable testpoints). \hfill $\blacksquare$
\end{description}
\label{ex:a3}
\end{ex}

\subsection{Most general solutions}
We now analyze the existence of most general solutions
for problems of the form~(\ref{eq1}). 

\begin{defi}[Most general solution]
A substitution $\sigma : X \rightarrow T_{\Pi}(X)$ 
which is a solution 
for a problem  of the form~(\ref{eq1}) is a {\em most
  general solution} if it has the property that 
for every other solution $\mu$ there exists a substitution $\delta$ 
such that for every $x \in X$, ${\cal T} \models \mu(x) \approx
\delta(\sigma(x))$. 

A substitution $\sigma : X \rightarrow T_{\Pi'}(X)$ 
which is a solution with {\sf if}-{\sf then}-{\sf else}-terms 
for a problem  of the form~(\ref{eq1}) is a {\em most
  general solution} (with {\sf if}-{\sf then}-{\sf else}-terms) 
if it has the property that 
for every other solution $\mu$ with {\sf if}-{\sf then}-{\sf
  else}-terms  
there exists a substitution $\delta$ 
such that for every $x \in X$, ${\cal T}' \models \mu(x) \approx
\delta(\sigma(x))$. 
\end{defi} 
We can also define a notion of most general conditional solution: a conditional solution which is
more general than all other conditional solutions. Assume that the 
theory $\T$ allows quantifier elimination, and let $\psi$ be a
quantifier-free formula 
containing only the variables ${\overline y}$ with  
$\psi({\overline y}) \equiv_{\T} \exists x_1 \dots \exists x_n
  \phi(x_1, \dots, x_n, {\overline y}).$
\begin{defi}[Most general conditional solution]
A substitution $\sigma : X \rightarrow T_{\Pi}(X)$ which is a conditional solution 
for a problem  of the form~(\ref{eq1}) is a {\em conditional most
  general solution} if it has the property that 
for every other conditional solution $\mu$ 
there exists a substitution $\delta$ 
such that ${\cal T} \models \psi \rightarrow \forall x (\mu(x) \approx \delta(\sigma(x)))$. 

A substitution $\sigma : X \rightarrow T_{\Pi'}(X)$ 
which is a conditional solution with {\sf if}-{\sf then}-{\sf else}-terms
for a problem  of the form~(\ref{eq1}) is a {\em conditional most
  general solution} with {\sf if}-{\sf then}-{\sf else}-terms 
if it has the property that 
for every other conditional solution $\mu$ with 
{\sf if}-{\sf then}-{\sf else}-terms
there exists a substitution $\delta$ 
such that for every $x \in X$, ${\cal T}' \models \psi \rightarrow
\forall x (\mu(x) \approx \delta(\sigma(x)))$. 
\end{defi} 
In what follows, in order to keep the formulation simple, 
when we talk about {\em solutions} resp.\ {\em most general solutios} 
we will in general mean {\em conditional solutions with {\sf if}-{\sf
    then}-{\sf else}-terms} resp.\ 
{\em most general conditional solutions with {\sf if}-{\sf
    then}-{\sf else}-terms}.

\smallskip
We show that under assumptions (A1), (A2) and (A3) 
the most general solutions (with {\sf if}-{\sf then}-{\sf else}-terms)
for $\exists {\overline x} \phi({\overline x}, {\overline
    y})$ (if solutions exist), can be described as substitutions over the
  signature $\Pi' = \Pi \cup \Sigma_{\sf ite}$, which can also be used
  to describe  the reproductive solutions.

\begin{thm}
Let $\T$ be a theory satisfying assumptions (A1), (A2) and (A3),  
and $\T'$ be its {\sf ite}-expansion (cf. Definition~\ref{defi:tprim}).
Let $\phi(x_1, \dots, x_n, {\overline y})$ be a conjunction of literals
with free variables $x_1, \dots, x_n, {\overline y}$, and 
$\psi({\overline y}) \equiv_{\T} \exists x_1 \dots \exists x_n \phi(x_1, \dots,
  x_n, {\overline y})$. Let $t_1, \dots, t_n$ be $\Sigma'$-terms 
  such that  
$\T' \models \psi({\overline y}) \rightarrow \phi(t_1, \dots t_n,
{\overline y})$. 
Let the substitution $\sigma$ be defined by
$$\sigma(x_i) := \ite{\phi(x_1, \dots, x_n, {\overline y})}{x_i}{t_i},$$ 
and by $\sigma(y) = y$ for the other variables. 
\begin{itemize}
\item[(i)] The substitution $\sigma$ has the following properties:
\begin{itemize}
\item[(a)] $\T' \models \psi({\overline y}) \rightarrow
  \sigma(\phi)$. 
\item[(b)] For any other substitution $\mu$ w.r.t.\ $\Sigma'$ 
such that $\T' \models
  \psi({\overline y}) \rightarrow \mu(\phi)$ we have $\T' \models
  \psi({\overline y}) \rightarrow (\mu(\sigma(z))
  \approx \mu(z))$ for every variable $z \in X$. 
\end{itemize}
\item[(ii)]  Let $\A$ be 
as in assumption (A1), $\A'$ its {\sf ite}-expansion, and $\beta : X
\rightarrow \A'$ be such that $(\A',\beta) \models \psi({\overline
  y})$. Let the set of solutions in $\A'$ w.r.t.\ $\beta$ be: 
$$S_{\A', \beta} = \{ (a_1, \dots, a_n) \in |\A| = |\A'| \mid (\A',
\beta^{\overline a})
\models \phi(x_1,\dots, x_n, {\overline y}) \},$$ 
where $\beta^{\overline a} = \beta[x_1 \mapsto
a, \dots, x_n \mapsto a_n],$ and let 
$$F(a_1, \dots, a_n) = (\A'(\beta^{\overline a})(\sigma(x_1)), \dots,
\A'(\beta^{\overline a})(\sigma(x_n))).$$ 
Then the following hold: 
\begin{enumerate}
\item[(a)] $F(|\A|^n) \subseteq S _{\A', \beta}$, and 
\item[(b)] for every
$(a_1, \dots, a_n) \in S _{\A', \beta}$ we have $F(a_1, \dots, a_n) = (a_1,
\dots, a_n)$.
\end{enumerate} 
Hence, $F(|\A|^n) = S _{\A', \beta}$. 
\end{itemize}
\label{reprod-sol-qe}
\end{thm}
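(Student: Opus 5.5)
The plan is to mirror the proof of Theorem~\ref{discr-constants}, with the single model $\A'$ playing the role of the simple algebras. The key observation is that $\T'$ is the set of all consequences of $\A'$. Hence, for any $\Pi'$-formula $\chi$, proving $\T'\models\chi$ reduces to proving $(\A',\beta)\models\chi$ for every valuation $\beta : X\rightarrow|\A'|$. In $\A'$ the operations ${\sf ite}_p$ are interpreted literally as case distinctions. First I will check, by induction on the number of literals $k$ in the nested definition of $\ite{\phi}{u}{v}$, that $\A'(\beta)(\ite{\phi}{u}{v})$ equals $\A'(\beta)(u)$ if $(\A',\beta)\models\phi$, and equals $\A'(\beta)(v)$ otherwise. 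Each ${\sf ite}_{p_i}$ (including ${\sf ite}_{\neg q}$) either passes to the next test or returns $z$, so the whole term yields $y$ exactly when all literals hold.

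For (i)(a), fix $\beta$ with $(\A',\beta)\models\psi(\overline y)$. There are two cases. If $(\A',\beta)\models\phi(\overline x,\overline y)$, then $\A'(\beta)(\sigma(x_i))=\beta(x_i)$ for all $i$, so $\sigma(\phi)$ evaluates exactly as $\phi$ and holds. Otherwise $\A'(\beta)(\sigma(x_i))=\A'(\beta)(t_i)$ for all $i$. Then $\sigma(\phi)$ evaluates as $\phi(t_1,\dots,t_n,\overline y)$, which holds by the hypothesis $\T'\models\psi\rightarrow\phi(\overline t,\overline y)$. The one point to state carefully is the substitution lemma: evaluating $\phi(\sigma(x_1),\dots,\sigma(x_n),\overline y)$ under $\beta$ is the same as evaluating $\phi$ under $\beta[x_i\mapsto\A'(\beta)(\sigma(x_i))]$. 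This works because $\sigma$ fixes the $y$'s, although $\sigma(x_i)$ itself contains the $x_j$'s.

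For (i)(b), let $\mu$ satisfy $\T'\models\psi\rightarrow\mu(\phi)$. For $z\notin\{x_1,\dots,x_n\}$ the claim is trivial, so I take $z=x_i$. Then $\mu(\sigma(x_i))=\ite{\mu(\phi)}{\mu(x_i)}{\mu(t_i)}$, since substitution commutes with the ${\sf ite}$ construction. Here $\mu$ must fix the parameters $\overline y$ (as the solutions under consideration do), so that $\mu(\psi)=\psi$. For any $\beta$ with $(\A',\beta)\models\psi$ we get $(\A',\beta)\models\mu(\phi)$, and the ${\sf ite}$ therefore selects $\mu(x_i)$. Since this holds for every such $\beta$, it gives $\T'\models\psi\rightarrow\mu(\sigma(x_i))\approx\mu(x_i)$.

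Part (ii) is then immediate. For (a), apply (i)(a) in $\A'$ under the valuation $\beta^{\overline a}$, which still satisfies $\psi$ because $\psi$ only mentions $\overline y$; this shows that $F(\overline a)$ satisfies $\phi$, i.e.\ $F(\overline a)\in S_{\A',\beta}$. For (b), if $\overline a\in S_{\A',\beta}$ then $\phi$ holds under $\beta^{\overline a}$, so each ${\sf ite}$ returns $a_i$. The equality $F(|\A|^n)=S_{\A',\beta}$ follows by combining (a) and (b). The main obstacle is the bookkeeping in (i)(b): the statement as written does not require $\mu(y)=y$, and without it $\mu(\psi)$ and $\psi$ differ. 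I would make explicit that conditional solutions fix $\overline y$, as in their definition.
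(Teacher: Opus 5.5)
Your proposal is correct and follows the paper's proof essentially step for step: everything is checked in the single model $\A'$ (legitimate since $\T'$ is the theory of $\A'$), (i)(a) and (ii) come from the case split on whether $\phi$ holds, and (i)(b) from pushing $\mu$ through the ${\sf ite}$ construction and using that $(\A',\beta)\models\mu(\phi)$ whenever $(\A',\beta)\models\psi$. The ``main obstacle'' you raise is not real: neither the hypothesis $\T'\models\psi(\overline y)\rightarrow\mu(\phi)$ nor the conclusion of (i)(b) mentions $\mu(\psi)$, so your argument already works for arbitrary $\mu$, and adding the restriction $\mu(\overline y)=\overline y$ would only prove a weaker statement than (i)(b) claims.
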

{\em Proof:} 
(i) (a) Let $\A$ be as in assumption (A1).  Let $\A'$
its {\sf ite}-expansion, and $\beta : X \rightarrow \A'$ 
such that $(\A', \beta) \models \psi({\overline y})$. Then 
$(\A', \beta) \models \phi(t_1, \dots, t_n, {\overline y})$ and:

\medskip
$\begin{array}{rl}
\A'(\beta)(\sigma(\phi)) = & \A'(\beta)(\phi(\sigma(x_1),
\dots, \sigma(x_n), {\overline y})) \\
= & \left\{ \begin{array}{ll}
\A'(\beta)(\phi(x_1, \dots, x_n , {\overline y})) & \text{ if }
(\A', \beta) \models \phi(x_1, \dots, x_n, {\overline y}) \\
\A'(\beta)(\phi(t_1, \dots, t_n, {\overline y})) & \text{ otherwise} \end{array}
\right. 
\end{array}$

\smallskip
\noindent so $(\A', \beta) \models \sigma(\phi)$. 
Thus, $\T' \models \forall {\overline y} (\psi({\overline y}) \rightarrow \sigma(\phi))$.

\smallskip
\noindent (i) (b) Let $\mu$ be a substitution w.r.t.\ $\Sigma'$ 
with $\T' \models \forall {\overline y} (\psi \rightarrow \mu(\phi))$. 
Then for all $\beta : X \rightarrow \A'$ with $(\A', \beta) \models
\psi({\overline y})$ we have $(\A', \beta) \models \mu(\phi)$, and therefore:

\smallskip
$\begin{array}{ll}
\A'(\beta)(\mu(\sigma(x_i))) & = \A'(\beta)(\mu({\sf if}~\phi~{\sf
  then}~x_i~{\sf else}~t_i)) \\
& = \A'(\beta)(\ite{\mu(\phi)}{\mu(x_i)}{\mu(t_i)}) =
\A'(\beta)(\mu(x_i)).
\end{array}$
 
\smallskip
\noindent So, for all $x \in X$: 
$\A' \models \psi({\overline y}) {\rightarrow} \mu(\sigma(x)) {\approx} \mu(x)$, so 
$\T' \models \psi({\overline y}) {\rightarrow} \mu(\sigma(x))
{\approx} \mu(x)$. 

\medskip
\noindent (ii) (a) 
Let $(a_1, \dots, a_n) \in |\A|^n = |\A'|^n$. 
From the way $\sigma$ is defined, we have:  
$$
\begin{array}{rl}
\A'(\beta^{\overline a})(\sigma(x_i)) = & 
\A'(\beta^{\overline a})({\sf if}~\phi(x_1, \dots, x_n)~{\sf then}~x_i~
{\sf else}~t_i) \\
 = & \left\{ \begin{array}{ll}
a_i & \text{ if } (\A', \beta^{\overline a}) \models \phi(x_1, \dots,
x_n, {\overline y}) \\
\A'(\beta^{\overline a})(t_i) & \text{ otherwise.} 
\end{array} \right.
\end{array}$$
Thus, if $(\A', \beta^{\overline a}) \models \phi(x_1, \dots,
x_n, {\overline y})$ 
then $F(a_1, \dots, a_n) = (a_1, \dots, a_n) \in S_{\A, \beta}$,
otherwise $F(a_1, \dots, a_n) = (\A'(\beta^{\overline a})(t_1), \dots,
\A'(\beta^{\overline a})(t_n))$, which is also in $S_{\A, \beta}$
because $\T' \models \psi({\overline y}) \rightarrow \phi(t_1, \dots,
t_n, {\overline y})$ and $(\A', \beta)\models \psi({\overline y})$. 

\smallskip
\noindent (ii) (b) If $(a_1, \dots, a_n) \in S _{\A', \beta}$
then $(A', \beta^{\overline a}) \models
\phi(x_1, \dots, x_n, {\overline y})$, 
so $F(a_1, \dots, a_n) = ( \A'(\beta^a)(\sigma(x_1)), \dots, 
\A'(\beta^a)(\sigma(x_n))) = (a_1, \dots, a_n)$. 
\QED

\bigskip
\noindent {\bf Example: Linear arithmetic.}
The following result is an immediate 
consequence of Theorem~\ref{reprod-sol-qe}, and explains how to use 
Theorem~\ref{reprod-sol-qe} and particular solution of constraints 
when $\T = LI({\mathbb R})$ to generate most general solutions. 

Note that the particular solution used in the most general 
solution can contain {\sf if}-{\sf then}-{\sf else} operations.

\begin{thm}
Let $\T = LI({\mathbb R})$ and $\phi(x, y_1, \dots, y_n)$ be a
quantifier-free formula in the signature $\Pi$ of $\T$. 
Assume that $\exists x \phi(x, y_1, \dots, y_n) \equiv
\bigvee_{t \in T} \phi(t, y_1, \dots, y_n)$, where the set of testpoints 
$T = \{ t_1, \dots, t_m \}$ is a set of $\Pi'$ terms.

A conditional most general solution with {\sf if}-{\sf then}-{\sf
  else}-terms (under the condition that 
$\bigvee_{t \in T} \phi(t, y_1, \dots, y_n)$ holds) 
can be expressed by: 

\medskip
\noindent $\begin{array}{l@{}l}
\sigma(x) \,{=}\, & {\sf if}~{\phi(x, x_1, \dots, x_n)}~{\sf then}~x~{\sf
  else}~({\sf if}~\phi(t_1, x_1, \dots, x_n)~{\sf then}~t_1~{\sf else}  \\
& \!\!\!({\sf if}~\phi(t_2, x_1, \dots, x_n)~{\sf then}~t_2~{\sf else} (...
({\sf if}~{\phi(t_m, x_1, \dots, x_n)}~{\sf then}~t_m~{\sf else}~c_f)...)))
\end{array}$   

\medskip
\noindent where $c_f$ is a selected constant which stands for ``no
solution''. 
\label{case-distinction}
\end{thm}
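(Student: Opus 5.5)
The plan is to reduce the statement to Theorem~\ref{reprod-sol-qe}, applied with $n=1$ and with the nested {\sf if}-{\sf then}-{\sf else} term as the particular solution $t_1$. The parameters $x_1,\dots,x_n$ in the displayed $\sigma$ play the role of ${\overline y}$. Set $\psi := \bigvee_{t\in T}\phi(t,{\overline y})$. This is quantifier-free and equivalent to $\exists x\,\phi$ w.r.t.\ $\T$ by hypothesis, so the equivalence needed in Theorem~\ref{reprod-sol-qe} holds. Assumptions (A1) and (A2) for $LI({\mathbb R})$ with ${\cal F}$ the quantifier-free formulae were verified in Example~\ref{ex:a1-a2}, with $\A={\mathbb R}$. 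Define
\[ u := \ite{\phi(t_1,{\overline y})}{t_1}{(\ite{\phi(t_2,{\overline y})}{t_2}{\dots(\ite{\phi(t_m,{\overline y})}{t_m}{c_f})\dots})}. \]
Then the displayed $\sigma(x)$ is exactly $\ite{\phi(x,{\overline y})}{x}{u}$, which is the shape used in Theorem~\ref{reprod-sol-qe}.

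The main step is to check that $\T'\models\psi({\overline y})\rightarrow\phi(u,{\overline y})$; this is precisely the instance of (A3) that Theorem~\ref{reprod-sol-qe} needs. Since $\T'$ is the set of consequences of $\A'$, it suffices to work in $\A'$. Take $\beta$ with $(\A',\beta)\models\psi$, and let $j$ be the least index with $(\A',\beta)\models\phi(t_j,{\overline y})$; such a $j$ exists because $\psi$ holds. Unfolding the nested ${\sf ite}$ by induction on the nesting depth gives $\A'(\beta)(u)=\A'(\beta)(t_j)$. Hence $(\A',\beta)\models\phi(u,{\overline y})$, because evaluating a literal only depends on the value of the term substituted for $x$. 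Note that $c_f$ is never reached under $\psi$, so its interpretation (any fixed element of ${\mathbb R}$) is irrelevant.

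One technical point must be settled first: $\phi$ may be an arbitrary quantifier-free formula rather than a conjunction of literals, and the $t_j$ are $\Pi'$-terms. So the construct $\ite{\phi}{\cdot}{\cdot}$ must make sense for such $\phi$. I would handle this by extending the ${\sf ite}$ definition inductively over Boolean connectives, with $\ite{\neg\chi}{a}{b}=\ite{\chi}{b}{a}$ and $\ite{\chi_1\vee\chi_2}{a}{b}=\ite{\chi_1}{a}{(\ite{\chi_2}{a}{b})}$. Alternatively, one can put $\phi$ in DNF and observe that the proof of Theorem~\ref{reprod-sol-qe} uses only the semantic clause ``value is $a$ if $\chi$ holds, $b$ otherwise''. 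I expect this well-definedness bookkeeping to be the main (mild) obstacle, since the rest is routine unfolding.

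With that in place, Theorem~\ref{reprod-sol-qe}(i)(a) gives $\T'\models\psi\rightarrow\sigma(\phi)$, so $\sigma$ is a conditional solution. Part (i)(b) gives, for every conditional solution $\mu$, $\T'\models\psi\rightarrow\mu(\sigma(z))\approx\mu(z)$ for all variables $z$. Taking $\delta:=\mu$ witnesses that $\sigma$ is more general than $\mu$ under $\psi$. Hence $\sigma$ is a conditional most general solution with {\sf if}-{\sf then}-{\sf else}-terms.
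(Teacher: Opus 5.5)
Your proposal is correct and follows essentially the paper's route. The paper likewise takes the nested term over the testpoints as the particular solution $t_0$ (your $u$), checks that under $\bigvee_{t\in T}\phi(t,\overline{y})$ the nested {\sf ite} evaluates to a satisfying testpoint so that $\sigma$ is a conditional solution, and then appeals to Theorem~\ref{reprod-sol-qe}, re-deriving $\mu(\sigma(x))\approx\mu(x)$ explicitly; your extension of {\sf ite} to arbitrary Boolean combinations fills a point the paper leaves implicit, since Theorem~\ref{reprod-sol-qe} is stated only for conjunctions of literals.
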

\noindent {\em Proof:} 
$\T = LI({\mathbb R})$ satisfies conditions (A1)--(A3) with ${\cal A}
= {\mathbb R}$. In what follows, let ${\cal A}'$ be the extension of 
${\mathbb R}$ with {\sf if}-{\sf then}-{\sf else} operations. 

We prove that $\sigma$ is a conditional solution: Let $\beta : X
\rightarrow \A'$ with  $(\A', \beta) \models 
\bigvee_{t \in T} \phi(t, y_1, \dots, y_n)$

Let $t_0$ be the {\sf if}-{\sf then}-{\sf else}-term encoding a particular solution obtained from 
$\bigvee_{t \in T} \phi(t, y_1, \dots, y_n)$, if $T = \{ t_1, \dots,
t_m)$ as follows: 
$$\begin{array}{ll}
t_0 = & {\sf if}~\phi(t_1, y_1, \dots, y_n)~{\sf then}~t_1~{\sf else}
\\
&  {\sf if}~\phi(t_2, y_1, \dots, y_n)~{\sf then}~t_2~{\sf else}\\
& \dots \\
&  {\sf if}~\phi(t_m, y_1, \dots, y_n)~{\sf then}~t_m~{\sf else}~c_f
\end{array}$$
Then we have: 

\smallskip
\noindent $\A'(\beta)(\phi(\sigma(x), y_1, \dots, y_n))$

$= \left\{ \begin{array}{ll}
\A'(\beta)(\phi(x, y_1, \dots, y_n)) & \text{ if } (\A', \beta)
\models \phi(x, y_1, \dots, y_n) \\
\A'(\beta)(\phi(t_0, y_1, \dots, y_n)) & \text{ otherwise } 
\end{array} \right.$ 

$= \ite{\A'(\beta)(\phi(x, y_1, \dots, y_n))}{\A'(\beta)\phi(x, y_1, \dots, y_n)}{}$

$\ite{\A'(\beta)(\phi(t_1, y_1, \dots, y_n))}{\A'(\beta)(\phi(t_1, y_1, \dots, y_n))}{}$

$\dots$

$\ite{\A'(\beta)(\phi(t_1, y_1, \dots, y_n))}{\A'(\beta)(\phi(t_m, y_1, \dots, y_n))}{\perp}.$

\medskip
\noindent Since $(\A', \beta) \models \bigvee_{t \in T} \phi(t, x_1,
\dots, x_n)$, it follows that 
$\A'(\beta)(\phi(\sigma(x), y_1, \dots, y_n))$ is true. 
Thus, $\T' \models \bigvee_{t \in T} \phi(t, y_1, \dots, y_n)
\rightarrow \phi(\sigma(x), y_1, \dots, y_n)$. 

\smallskip
\noindent By Theorem~\ref{reprod-sol-qe}, 
$\sigma$ is the most general conditional solution. 
We show how the proof of Theorem~\ref{reprod-sol-qe} specializes 
to the particular situation in which $\T = LI({\mathbb R})$.

Let $\mu$ be another substitution (possibly containing {\sf if}-{\sf
  then}-{\sf else} terms) with 
$$\T' \models \bigvee_{t \in T} \phi(t, y_1, \dots,
y_n)  \rightarrow \phi(\mu(x), y_1, \dots, y_n).$$ 
Let $\beta : X \rightarrow \A'$. 
Assume $(\A', \beta) \models \bigvee_{t \in T} \phi(t, y_1, \dots,
y_n)$. Then $(\A', \beta) \models \phi(\mu(x), y_1, \dots, y_n)$. 
Then, we have: 

\smallskip
$\begin{array}{ll}
\A'(\beta)(\mu(\sigma(x))) := & \ite{ \A'(\beta)(\phi(\mu(x), y_1, \dots, y_n))}{\A'(\beta)(\mu(x))}{} \\
& \ite{\A'(\beta)(\phi(t_1, y_1, \dots, y_n))}{\A'(\beta)(t_1)}{}  \\
& \ite{\A'(\beta)(\phi(t_2, y_1, \dots, y_n))}{\A'(\beta)(t_2)}{}  \\
& \dots \\
& \ite{\A'(\beta)(\phi(t_m, y_1, \dots, y_n))}{\A'(\beta)(t_m)}{\bot}
\end{array}$   

$= \A'(\beta)(\mu(x)).$

\smallskip
\noindent 
Thus, $\T' \models \bigvee_{t \in T} \phi(t, y_1, \dots,
y_n)  \rightarrow (\mu(\sigma(x)) \approx \mu(x))$ 
for every variable $x$.
 \QED

\begin{ex} 
Let $\T = LI({\mathbb R})$ and $\phi(x, x_1, x_2) = (x \geq x_1) \wedge (x \geq x_2)$. 
It is clear that $\exists x \phi(x, x_1, x_2)$ is true. 
To find a solution we proceed as follows: The set of
testpoints is $T = \{ - \infty, x_1, x_2\}$. 
It can be easily seen that the testpoint $- \infty$ is not needed (it cannot be a solution), so  
$$\exists x \phi(x, x_1, x_2) \equiv_{\T} \phi(x_1, x_1, x_2) \vee
  \phi(x_2, x_1, x_2) \equiv_{\T} x_1 \geq x_2 \vee x_2 \geq x_1.$$ 

\noindent Note that $\T \models \forall x_1, x_2 (x_1 \geq x_2 \vee
x_2 \geq x_1)$, so $\exists x \phi(x, x_1, x_2)$ has always a
solution. A solution for $\exists x \phi(x, x_1, x_2)$ is the
{\sf if}-{\sf then}-{\sf else}-term: 

\smallskip
$\begin{array}{rcl} 
t & = & \ite{\phi(x_1, x_1, x_2)}{x_1}{(\!\ite{\phi(x_2, x_1, x_2)}{x_2}{x_1)}}\\
& = & \ite{(x_1 \geq x_1 \wedge x_1 \geq x_2)}{x_1}{(\!\ite{(x_2 \geq x_1
  \wedge x_2 \geq x_2)}{x_2}{x_1)}}\\
& = & \ite{x_1 \geq x_2}{x_1}{x_2}.
\end{array}$

\smallskip
\noindent The substitution $\sigma$ with 
$\sigma(x)= (\ite{\phi(x, x_1, x_2)}{x}{t}~)$, $\sigma(x_1) = x_1$ and
$\sigma(x_2) = x_2$ is the most general solution.  \hfill $\blacksquare$ 
\label{ex-x1-x2}
\end{ex}
\begin{ex}
Let $\T = LI({\mathbb R})$ and $\phi(x, y, z) = (x + y + z \geq 3 \wedge y + 2z \leq 2)$. 
There is a value $x \in {\mathbb R}$ such that $\phi(x, y, z)$ holds 
iff $\exists x (x + y + z \geq 3 \wedge y + 2z \leq 2)$ holds. 
The set of testpoints is in this case $T = \{ - \infty, 3 - y - z \}$.
It can again be seen that the testpoint $- \infty$ is not needed 
(it cannot be a solution), and  
$$\exists x (x + y + z \geq 3 \wedge y + 2z \leq 2) \equiv_{\T} y + 2z \leq
2.$$ 
\noindent If $y + 2z \leq 2$, a solution for $\exists x \phi$ is 
$t = 3 - y - z$. 
By Theorems~\ref{reprod-sol-qe} and~\ref{case-distinction}, the most general (conditional) 
solution, under the assumption that $y + 2z \leq 2$ is: 

\smallskip
$\sigma(x) = \ite{x + y + z \geq 3 \wedge y + 2z \leq 2}{x}{3 - y -
  z}$.

\smallskip
\noindent This means: if $\beta(x) = a, \beta(y) = b, \beta(z) = c \in
{\mathbb R}$ with $b + 2c \leq 2$ then  
we have: 

\smallskip
$({\mathbb R}, \beta) \models y + 2z \leq 2 \quad \text{ and } \quad {\mathbb R}(\beta)(\sigma(x)) = \left\{ \begin{array}{ll}
a & \text{ if } a + b + c \geq 3 \\
3 - b - c & \text{ if } a + b + c < 3
\end{array} \right.$
\hfill $\blacksquare$ 
\label{ex-x-y-z}
\end{ex}
In the next example we illustrate the difference between 
(most general, conditional)
solutions obtained by successively eliminating variables and 
most general conditional solutions obtained by applying
Theorem~\ref{reprod-sol-qe}.
\begin{ex}
Let $\T = LI({\mathbb R})$ and $\phi(x, y, z) = (x + y + z = 3
\wedge y + 2z = 3)$. We analyze two different ways of 
regarding the computation of a solution (resp.\ most general solution)  
for $\exists z \exists y \exists x \phi(x, y, z)$. 

\smallskip
\noindent {\bf Eliminating the variables successively:}
We first analyze the formula 
$$\exists x \phi(x, y, z) \equiv_{\T} y + 2z = 3.$$ 
The set of testpoints is $T = \{ 3 - y - z \}$ ($t_x = 3 - y - z$ is
also a solution for $\phi$ under the condition that $y + 2z = 3$ holds). 
Assume that $y + 2z = 3$. 
The most general solution for $\exists x \phi(x, y, z)$
under the condition $y + 2z = 3$ is, by 
Theorems~\ref{reprod-sol-qe}, the substitution $\sigma_x$ with: 
$$\sigma_x(x) = {\sf if}~ (x + y + z = 3
\wedge y + 2z = 3)~{\sf then}~x~{\sf else}~(3 - y - z) = (3 - y -
z).$$
and $\sigma_x(y) = y, \sigma_x(z) = z$. 
Under the assumption that $y + 2z = 3$ it can be seen that in 
$\A'$ $\sigma_x(x) = (3 - y - z) =  z.$ In other words, 
for every $\beta : X \rightarrow \A'$ with $(\A', \beta) \models y +
2z = 3$
we have $(\A', \beta) \models \sigma_x(x) =  z.$

We next analyze the formula 
$\exists y \exists x \phi(x, y, z) \equiv_{\T} \exists y (y + 2z = 3)$. 
We know that $\exists y (y + 2z = 3) \equiv_{\T} \top$. The solution
(and the unique testpoint) is $t_y = 3 - 2z$. 
The most general solution for $\exists y (y + 2z = 3)$ is the substitution $\sigma_y$ with: 
$$\sigma_y(y) = {\sf if}~ (y + 2 z = 3)~{\sf then}~y~{\sf else}~(3 - 2z) = (3 - 2z).$$
and $\sigma_y(x) = x, \sigma_y(z) = z$. 

Since $\exists z \exists y \exists x \phi(x, y, z) \equiv_{\T} \exists z
\top \equiv_{\T} \top$ any
real number or term $t_z$ is a solution. The most general solution is 
$\sigma_z$ with $\sigma_z(z) = {\sf if}~ \top~ {\sf then}~z~{\sf
else}~t_z = z$, $\sigma_z(u) = u$ for every other variable, i.e.\ the
identity.

\medskip
\noindent {\bf Applying Theorem~\ref{reprod-sol-qe}.}
We now analyze the formula 
$$\exists x, y, z \phi(x, y, z) = \exists x,
y, z (x + y + z = 3 \wedge y + 2z = 3) \equiv_{\T} \top.$$ 
A solution is $x = 1, y = 1, z = 1$. 
Theorem~\ref{reprod-sol-qe} yields the most general solution 
$\sigma$ with 
$$\begin{array}{rcl}
\sigma(x) & = & {\sf if}~ (x + y + z = 3
\wedge y + 2z = 3)~{\sf then}~x~{\sf else}~1 \\
\sigma(y) & = & {\sf if}~ (x + y + z = 3
\wedge y + 2z = 3)~{\sf then}~y~{\sf else}~1 \\
\sigma(z) & = & {\sf if}~ (x + y + z = 3
\wedge y + 2z = 3)~{\sf then}~z~{\sf else}~1. 
\end{array}$$
It can be seen that 
$\T' \models \sigma(x) = \sigma(z)$ and $\T' \models \sigma(y) =
\sigma(3 - 2z)$. 

Indeed, for every $\beta : X \rightarrow {\mathbb R}$ we have: 

$\begin{array}{lcl} 
\A'(\beta)(\sigma(x)) & = & \left\{ \begin{array}{ll}
\beta(x) & \text{ if } \beta(x) + \beta(y) + \beta(z) = 3 \text{ and }
\beta(y) + 2 \beta(z) = 3 \\
1 & \text{ otherwise}
\end{array} \right.\\
& = & \left\{ \begin{array}{ll}
\beta(z) & \text{ if } \beta(x) + \beta(y) + \beta(z) = 3 \text{ and }
\beta(y) + 2 \beta(z) = 3 \\
1 & \text{ otherwise}
\end{array} \right.\\
& = & \A'(\beta)(\sigma(z)).
\end{array}$ 

$\begin{array}{lcl} 
\A'(\beta)(\sigma(y)) & = & \left\{ \begin{array}{ll}
\beta(y) & \text{ if } \beta(x) + \beta(y) + \beta(z) = 3 \text{ and }
\beta(y) + 2 \beta(z) = 3 \\
1 & \text{ otherwise}
\end{array} \right.\\
& = & \left\{ \begin{array}{ll}
3 - 2 \beta(z) & \text{ if } \beta(x) + \beta(y) + \beta(z) = 3 \text{ and }
\beta(y) + 2 \beta(z) = 3 \\
3 - 2 * 1 & \text{ otherwise}
\end{array} \right.\\
& = & \A'(\beta)(\sigma(3 - 2z)).
\end{array}$ 

\medskip
\noindent {\bf Comparison between the two approaches.} 
We compare the substitutions $\sigma_z \circ \sigma_y \sigma_x =
\sigma_y \circ \sigma_x$ and
$\sigma$.  

It is easy to see that $\sigma_z \circ \sigma_y \circ \sigma_x =
\sigma_y \circ \sigma_x$ is a solution for $\exists z \exists y
\exists x \phi(x, y, z)$ under the condition that $y + 2z = 3$: 
Indeed, 
$$\begin{array}{lcl} 
\sigma_y (\sigma_x(\phi(x, y, z))) & = & \sigma_y (\phi(\sigma_x(x),
y, z)) \\
& = & \sigma_y(\phi(3 - y - z, y, z)) \\
& = & \phi(3 - \sigma_y(y) - z, \sigma_y(y), z) \\
& = & \phi(3 - 3 + 2z - z, 3 - 2z, z) = \phi(z, 3-2z, z) \\
& = & ((z + (3-2z)
+ z = 3) \wedge (3-2z)+2z= 3)) \\
& = & ((3 = 3) \wedge (3 = 3)) \equiv \top.
\end{array}$$
By Theorem~\ref{reprod-sol-qe}, $\sigma$ is more general than   
$\sigma_z \circ \sigma_y \circ \sigma_x =
\sigma_y \circ \sigma_x$. Indeed, $\sigma_y(\sigma_x(\sigma(u))) =
\sigma_y(\sigma_x(u))$ for every variable $u$. 

\smallskip
We now prove that $\sigma(\sigma_y(\sigma_x(u))) = \sigma(u)$ for
every variable $u \in X$: 

$$\begin{array}{lcl} 
\sigma(\sigma_y (\sigma_x(x)) & = & \sigma(\sigma_y(z)) = \sigma(z) = \sigma(x)\\
\sigma(\sigma_y(\sigma_x(y)) & = & \sigma(\sigma_y(y)) = \sigma(3  -
2z) = \sigma(y) \\
\sigma(\sigma_y(\sigma_x(z)) & = & \sigma(z)
\end{array}$$
\label{ex-composition}
\end{ex}
Example~\ref{ex-composition} suggests that we can obtain 
most general solutions either by successive variable elimination 
or by applying Theorem~\ref{reprod-sol-qe} directly. 
However, in general, the assumptions and the particular terms 
used for successive variable 
elimination might be different from the assumptions used when 
eliminating all variables at the same time. In
Example~\ref{ex-composition} the particular solution $t_x(y, z)$ for $x$ 
used to compute $\sigma_x(x)$ is a ``uniform'' solution which is a 
solution for all $y, z$, which exists under the condition that 
$y + 2z = 3$. On the other hand, the solution $x = 1, y = 1, z = 1$ 
used when applying Theorem ~\ref{reprod-sol-qe} directly is a 
special solution, which exists without any additional conditions.
In future work we would like to
further investigate the link between these two possible ways of
computing most general (conditional) solutions.

\medskip
\noindent {\bf Example: Boolean algebras.} For $\T = {\sf Bool}$ the results specialize (with the
remarks on the choice of $\T' = {\sf Bool}$ in
Example~\ref{ex:a3}) to Theorem~\ref{discr-constants}.

\section{Solutions depending on given variables}
\label{sect:some-vars}

Consider the problem $\exists x \phi(x, y_1, \dots, y_n)$. We analyze
possibilities of deciding whether there exist solutions which do not
depend on a given variable $y_i$, in situations in which the theory
$\T$ {\em allows quantifier elimination}. 

This type of problem is a generalization of unification with linear
constant restrictions (cf.\ \cite{BaaderSchulz96} and the link between
$E$-unification with linear constant restrictions and validity of positive
sentences mentioned in Theorem~\ref{proposition-baader-positive}). 

\begin{defi}
Let $\T$ be a theory and $\phi(x, y_1, \dots, y_n)$ be a quantifier-free formula
with free variables $x, y_1, \dots, y_n$. 
A term  $t(y_{i_1}, \dots,
y_{i_k})$ is a {\em solution}  for 
$\exists x \phi(x, y_1, \dots,
y_n)$ which only contains 
variables in a subset $\{ y_{i_1}, \dots, y_{i_k} \} \subseteq \{ y_1, \dots,
y_n \}$ iff $\T \models \forall y_1, \dots, y_n \phi(t(y_{i_1}, \dots,
y_{i_k}), y_1, \dots, y_n)$. 
\end{defi}

\begin{defi}
Let $\T$ be a theory allowing quantifier elimination 
and $\phi(x, y_1, \dots, y_n)$ be a quantifier-free formula
with variables $x, y_1, \dots, y_n$. 

A term  $t(y_{i_1}, \dots,
y_{i_k})$ is a {\em conditional solution} for 
$\exists x \phi(x, y_1, \dots,
y_n)$ which only contains 
variables in a subset $\{ y_{i_1}, \dots, y_{i_k} \} \subseteq \{ y_1, \dots,
y_n \}$ if and only if  $\T \models \forall y_1, \dots, y_n (\psi(y_1, \dots, y_n)
\rightarrow \phi(t(y_{i_1}, \dots,
y_{i_k}), y_1, \dots, y_n))$, where $\psi$ is the quantifier-free formula with 
$\psi(y_1, \dots, y_n) \equiv_{\T} \exists x \phi(x, y_1, \dots, y_n)$.
\end{defi}

\begin{thm}
Let $\T$ be a theory allowing quantifier elimination, 
and $\phi(x, y_1, \dots, y_n)$ be a quantifier-free formula
with variables $x, y_1, \dots, y_n$. 
Then we can decide whether there 
is a solution $x = t(y_{i_1}, \dots, y_{i_k})$ for 
$\exists x \phi(x, y_1, \dots, y_n)$ 
which only contains 
variables in a subset $\{ y_{i_1}, \dots, y_{i_k} \} \subseteq \{ y_1,
\dots, y_n \}$. 
\label{thm-solution-vars}
\end{thm}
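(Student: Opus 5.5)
The plan is to reduce the question to the validity of a sentence with a specific quantifier prefix, in the spirit of Theorem~\ref{proposition-baader-positive}. Then quantifier elimination decides that sentence. Write ${\overline z}$ for the variables $y_{i_1}, \dots, y_{i_k}$ and ${\overline w}$ for the remaining variables among $y_1, \dots, y_n$. The key observation is that a solution $x = t({\overline z})$ that does not depend on ${\overline w}$ corresponds to the truth of
$$\forall {\overline z}\, \exists x\, \forall {\overline w}\, \phi(x, {\overline z}, {\overline w}).$$
In this sentence $x$ is quantified after ${\overline z}$ but before ${\overline w}$, which is exactly the linear constant restriction discussed above. We then apply quantifier elimination (in the theory $\T$ itself, or in its {\sf ite}-expansion if solutions with {\sf if}-{\sf then}-{\sf else}-terms are allowed) in three steps. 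First, eliminate $\forall {\overline w}$ from $\forall {\overline w}\,\phi$, writing it as $\neg\exists{\overline w}\neg\phi$. This gives a quantifier-free $\chi(x, {\overline z})$. Second, eliminate $\exists x$ from $\chi$, giving a quantifier-free $\chi'({\overline z})$. Third, the question becomes whether $\T \models \forall {\overline z}\, \chi'({\overline z})$, i.e.\ whether $\neg \chi'$ is unsatisfiable. Eliminating the quantifiers of $\exists {\overline z}\neg\chi'$ reduces this to a ground sentence. We then assume, as is standard for the theories considered (e.g.\ $LI({\mathbb R})$, real closed fields), that quantifier elimination is effective and ground sentences are decidable.

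For correctness there are two directions. Soundness is immediate. If $t({\overline z})$ is a solution, i.e.\ $\T \models \forall {\overline y}\, \phi(t({\overline z}), {\overline y})$, then in every model and for every valuation of ${\overline z}$ the element denoted by $t$ witnesses $\exists x \forall {\overline w}\phi$, so the sentence holds. For the converse we construct a term from the sentence. By the second step, $\exists x\,\chi(x,{\overline z}) \equiv_\T \chi'({\overline z})$, and $\T \models \forall{\overline z}\,\chi'$. We then use the testpoint or (A3)-style construction of Section~\ref{sect:solutions-mgs}, applied to $\chi$ with parameters ${\overline z}$ only. This yields $\Sigma'$-terms $s({\overline z})$ with $\T' \models \chi'({\overline z}) \rightarrow \chi(s({\overline z}), {\overline z})$. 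Concretely, for $LI({\mathbb R})$ we take the nested {\sf if}-{\sf then}-{\sf else} over the testpoints of $\chi$, as in Theorem~\ref{case-distinction}. Since $\chi'$ is valid, $\chi(s({\overline z}),{\overline z})$ is valid, which means $\forall{\overline w}\,\phi(s({\overline z}),{\overline z},{\overline w})$. So $s$ is a solution mentioning only ${\overline z}$.

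The conditional-solution variant is handled the same way. We first compute $\psi({\overline y})$ by quantifier elimination. We then replace $\phi$ by $\psi \rightarrow \phi$ and decide $\forall{\overline z}\,\exists x\,\forall{\overline w}\,(\psi({\overline z},{\overline w})\rightarrow\phi(x,{\overline z},{\overline w}))$.

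The main obstacle is the converse direction. There, semantic existence of a witness depending only on ${\overline z}$ must be turned into a term. This needs either (A3)-type term witnesses for the intermediate formula $\chi$, which is generally not a conjunction of literals (we would first put it in disjunctive normal form and combine the witnesses for the disjuncts with nested {\sf ite}), or the reading of ``solution'' in the {\sf ite}-expanded signature. If only plain $\Pi$-terms are allowed, we would restrict the claim to deciding existence in the sense of the $\forall\exists\forall$ sentence. We would also have to note that, by (A1), validity in $\A$ and in $\T$ coincide for this sentence, so that the reduction is sound.
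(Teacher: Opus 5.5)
Your argument is correct under the hypotheses you make explicit. Its first step coincides with the paper's. The paper sets $\xi(x', y_{i_1},\dots,y_{i_k}) \equiv_{\T} \forall {\overline y}\,(\psi(y_1,\dots,y_n) \rightarrow \phi(x', y_1,\dots,y_n))$, with ${\overline y}$ the excluded variables; this is your $\chi$ in the conditional variant. It then only observes that a solution in the allowed variables exists iff some term $t$ in those variables satisfies $\T \models \xi(t, y_{i_1},\dots,y_{i_k})$, and stops there. That reduction needs only quantifier elimination and is phrased for plain terms. It does not by itself decide anything, though: whether some term satisfies a quantifier-free formula is again a unification-type question. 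The paper's examples settle it by inspection, e.g.\ $\xi \equiv \bot$ or $\xi \equiv x' \leq z+1$.

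You carry the reduction through. You also eliminate $\exists x$, decide $\T \models \forall {\overline z}\,\chi'$, and recover a term from validity via (A3)/testpoints in the {\sf ite}-expansion. You correctly use that $\chi$ has only ${\overline z}$ as parameters, so its witness mentions only ${\overline z}$. This buys an actual decision procedure and an explicit witness. The cost is assumptions absent from the statement: effective quantifier elimination, decidable ground sentences, and (A1)/(A3), i.e.\ reading ``solution'' as a solution with {\sf if}-{\sf then}-{\sf else}-terms. As you note, for plain $\Pi$-terms your $\forall\exists\forall$ test is only a necessary condition. The formula $x \geq x_1 \wedge x \geq x_2$ in $LI({\mathbb R})$, with both parameters allowed, shows this. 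Finally, the paper's proof actually treats the conditional notion ($\psi \rightarrow \phi$), although the statement speaks of solutions; you cover both.
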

{\em Proof:} Let $\psi$ be a quantifier-free formula with 
$\psi(y_1, \dots, y_n) \equiv_{\T} \exists x \phi(x, y_1, \dots, y_n)$.
If ${\overline y}$ is the sequence of variables in 
$\{ y_1, \dots, y_n \} \backslash \{ y_{i_1}, \dots, y_{i_k} \}$, 
let \\
 $\xi(x', y_{i_1}, \dots, y_{i_k}) \equiv_{\T} \forall {\overline y}
(\psi(y_1, \dots, y_n) \rightarrow \phi(x', y_1, \dots, y_n))$, where
$x'$ is a new variable. 
Then the following are equivalent:
\begin{itemize}
\item[(i)] There exists a term $t$ which only contains 
the variables $\{ y_{i_1}, \dots, y_{i_k} \}$ such that 
$\T \models \forall {\overline y} (\psi(y_1, \dots, y_n) \rightarrow
\phi(t, y_1, \dots, y_n))$. 
\item[(ii)] There exists a term $t$ which only contains 
the variables $\{ y_{i_1}, \dots, y_{i_k} \}$ such that $\T \models \xi(t, y_{i_1}, \dots, y_{i_k})$. \QED
\end{itemize}

\begin{ex}[Example~\ref{ex-x1-x2} ctd.]
Let $\T = LI({\mathbb R})$. 
A most general solution for $\exists x (x {\geq} x_1 \wedge x {\geq} x_2)$
is the substitution which is the identity for all variables $y \neq x$
and with 

\smallskip
$\begin{array}{ll}
\sigma(x)  =  {\sf if}~x \geq x_1 \wedge x \geq x_2~{\sf then}~x~{\sf
  else}~& ({\sf if}~x_1 \geq x_2~ {\sf then}~x_1~{\sf else}~\\
& ({\sf if}~x_2 \geq x_1~{\sf then}~x_2~{\sf else}~c_f)) 
\end{array}$

\medskip
\noindent We can test whether there is a solution which does not depend on $x_1$
as follows. Note that $\exists x (x \geq x_1 \wedge x \geq x_2) \equiv
\top$, and 
$\forall x_1 (\top \rightarrow x' \geq x_1 \wedge x' \geq x_2) \equiv
\bot,$
so there is no solution which does not depend on $x_1$. 
\hfill $\blacksquare$ \end{ex}

\begin{ex}[Example~\ref{ex-x-y-z} ctd.]
Let $\phi(x, y, z) = (x + y + z \,{\geq}\, 3 \wedge y + 2z \,{\leq}\, 2)$. 
There is a value $x \in {\mathbb R}$ such that $\phi(x, y, z)$ holds 
iff $\exists x \phi(x, y, z)$ holds. We know that $\exists x \phi(x, y, z) =
\exists x (x + y + z \geq 3 \wedge y + 2z \leq 2) \equiv y + 2z \leq
2$. Assume that $y + 2z \leq 2$. 
\noindent To check if there is a solution $x = t(z)$ for $\phi$ 
which does not depend on $y$ we proceed as follows: 
We use quantifier elimination and obtain: 

\smallskip 
$\forall y (y + 2z \leq 2 ~\rightarrow~ x' + y + z \geq 3 \wedge y + 2z
\leq 2)$ 

$\equiv \neg [\exists y (y \leq 2 - 2z \wedge x' + y + z < 3)]  \equiv \neg  \top \equiv \bot$

\smallskip
\noindent so there is no solution for $x$ which does not depend on $y$. 

\medskip
\noindent Consider now the formula $\phi(x, y, z) = (x+y+z \leq 3 \wedge y + 2 z
\leq 2)$. There exists a value $x \in {\mathbb R}$ such that $\phi(x,
y, z)$ holds iff $\exists x \phi(x, y, z)$ holds. 
Also in this case we have $\exists x \phi(x, y, z)  \equiv y + 2z \leq 2$. 
\noindent To check if there is a solution $x = t(z)$ for $\phi$ 
which does not depend on $y$ (under the assumption that 
$y + 2z \leq 2$) we proceed as follows: 
We use quantifier elimination and obtain: 

\smallskip
\noindent $\forall y (y + 2z \leq 2 ~\rightarrow~ x' + y + z \leq 3 \wedge y + 2z
\leq 2)$ 

$\equiv \neg [\exists y (y \leq 2 - 2z \wedge x' + y + z > 3)] \equiv \neg (3 - x' - z < 2 - 2z) \equiv x' \leq z + 1.$

\smallskip
\noindent One can see for instance that -- under the assumption that
$y + 2z \leq 2$ -- we have a solution depending only 
on $z$ (e.g.\ $t = z + 1$ or $t = z$ are such solutions). 
\hfill $\blacksquare$ \end{ex}
The theory of Boolean algebras does not allow in general quantifier
elimination for formulae containing negative literals, so
Theorem~\ref{thm-solution-vars} cannot be applied. 
We present a situation in which
the elimination of the universal quantifiers is possible, 
and Theorem~\ref{thm-solution-vars} can be applied.

\begin{ex} Let $\T = {\sf Bool}$. 
Consider the problem $\exists x ~(x_1 \wedge x) \vee (x_2
\wedge \neg x) \approx 0$.
We know that  $\exists x ~(x_1 \wedge x) \vee (x_2
\wedge \neg x) \approx 0 \equiv_{\sf Bool} x_1 \wedge x_2 \approx 0$. 
 
\noindent We check whether there is a solution which does not depend on $x_2$ as
follows: 

\noindent Note first that 
$(x_1 \wedge x_2 \approx 0) \equiv_{\sf Bool} (x_1 \leq \neg
x_2) \equiv_{\sf Bool} (x_2 \leq \neg x_1)$. 

\noindent Therefore, the following are equivalent in the theory of Boolean algebras:
\begin{enumerate}[(1)]
\item There exists a term $t$ not depending on $x_2$ such that 
$(x_1 \wedge t) \vee (x_2 \wedge \neg t) \approx 0$;
\item ${\sf Bool} \models \forall x_2 (x_1 \wedge x_2 \approx 0 \rightarrow (x_1 \wedge t) \vee
  (x_2 \wedge \neg t) \approx 0)$
\item  ${\sf Bool} \models \forall x_2 (x_2 \leq \neg x_1 \rightarrow (x_1 \wedge t) \vee
  (x_2 \wedge \neg t) \approx 0)$
\item  ${\sf Bool} \models  (x_1 \wedge t) \vee (\neg x_1 \wedge \neg
  t) \approx 0$, i.e.\ ${\sf Bool} \models (t \leq \neg x_1) \wedge (\neg t \leq x_1)$
\item ${\sf Bool} \models  t \approx \neg x_1$.
\end{enumerate}
\noindent 
The equivalence of (1) and (2) follows from the definition.
The equivalence of (2) and (3) is an easy consequence of the fact that in any Boolean
algebra $x_1 \wedge x_2 = 0$ iff $x_2 \leq \neg x_1$.
To prove that (3) implies (4) assume that 
$${\sf Bool} \models \forall x_2 (x_2 \leq \neg x_1 \rightarrow (x_1 \wedge t) \vee
  (x_2 \wedge \neg t) \approx 0).$$
Then, for $x_2 = \neg x_1$, we have $x_2 \leq \neg x_1$, so 
$${\sf Bool} \models  (x_1 \wedge t) \vee (\neg x_1 \wedge \neg
  t) \approx 0$$
i.e.\ ${\sf Bool} \models  (x_1 \wedge t) \approx 0$ and ${\sf Bool}
\models  (\neg x_1 \wedge \neg t) \approx 0$, so: 
$${\sf Bool} \models (t \leq \neg x_1) \wedge  (\neg t \leq x_1).$$
Conversely, assume that (4) holds. 
Then 
$${\sf Bool} \models (x_2 \leq \neg x_1) \rightarrow ((x_1 \wedge t) \vee (x_2 \wedge \neg
  t) \leq (x_1 \wedge t) \vee (\neg x_1 \wedge \neg
  t)$$
so ${\sf Bool} \models (x_2 \leq \neg x_1) \rightarrow ((x_1 \wedge t) \vee (x_2 \wedge \neg
  t) \approx 0)$.
The equivalence of (4) and (5) is immediate.

\smallskip
This means that the only solution which does depend on $x_2$ is $t =
\neg x_1$. 
 
Similarly it can be shown that the only term $t$ which does not depend
on $x_1$ such that $(x_1 \wedge t) \vee (x_2 \wedge \neg t) \approx 0$,
under the assumption that such a term exists (i.e.\ that $x_1 \wedge x_2 =
0$) is $t = x_2$. 
\end{ex}

\section{Second-order quantifier elimination}
\label{sect:soqe}

We now analyze how the idea introduced before can 
be extended to the analysis of second-order problems 
of the form 
$$\exists f  \, \phi(x_1, \dots, x_n),$$ 
where $\phi$ is a quantifier-free formula in the variables $x_1, \dots,
x_n$ in a signature containing the function symbol $f$ w.r.t.\ a
theory $\T$. 

\smallskip
We restrict to extensions $\T = \T_0 \cup {\sf
  UIF}_{\Sigma}$ of a theory $\T_0$ (with signature $\Pi_0$) allowing
quantifier elimination, 
with uninterpreted function symbols in a set $\Sigma$.
The theory $\T_0 \cup {\sf  UIF}_{\Sigma}$ does not in general allow 
quantifier elimination or second-order quantifier elimination. 
In \cite{Sofronie-cade2025,Sofronie-cade2025-arxiv}, a method for eliminating function symbols
  with arity $\geq$ 1 was proposed for sets $G$ 
of flat ground formulae w.r.t.\ a certain class of theory
extensions (local theory extensions, cf.\ \cite{Sofronie-cade-05}). 
In Algorithm~\ref{alg-symb-elim} we present 
a specialization of the method proposed in \cite{Sofronie-cade2025}
to extensions with uninterpreted function symbols.

\begin{algorithm}[t]
\caption{Algorithm for Function Elimination \cite{Sofronie-cade2025,Sofronie-cade2025-arxiv}}

{\small \begin{tabular}{ll}
& \\[-2ex]
{\bf Input:} & $\T := {\cal
  T}_0 \cup {\sf UIF}_{\Sigma}$ where ${\cal T}_0$ has signature
$\Pi_0$ and $\Sigma = \Sigma_s \cup \Sigma_e$ with $\Sigma \cap \Pi_0
= \emptyset$; \\
& $G$, a finite set of flat $(\Pi_0 \cup \Sigma)$-clauses with
variables ${\overline x}$; \\

{\bf Output:} & Quantifier-free $\Pi_s$-formula $\Gamma$ with
variables ${\overline x}$. \\[2ex]
\hline 
\end{tabular}
} 

{\small 
\begin{description}
\vspace{-1mm}
\item[Step 1] Compute from $G$ the set of $\Pi_0$-clauses $G_0 {\cup}
  {\sf Con}_0$ by introducing, in a bottom-up manner, new  
variables $x_t \in C$ for subterms $t {=} g(x_1, \dots, x_k)$ 
where $g {\in} \Sigma$, and storing the 
definitions $x_t {\approx} g(x_1, \dots, x_k)$ in a set ${\sf Def}$, 
and: 

\smallskip
${\sf Con}_0 = \{ \bigwedge_{i = 1}^n x_i \approx y_i \rightarrow x
\approx y \mid 
f(x_1, \dots, x_n) \approx x, 
f(y_1, \dots, y_n) \approx y \in {\sf Def} \}$

\smallskip
\item[Step 2]  Let $G_1 := G_0\cup {\sf Con}_0$. 
Among the variables in $G_1$, identify 
\begin{enumerate}
\item[(i)] the sets ${\overline x}_e$ and resp.\ ${\overline x}_s$ 
consisting of variables $x_f$ 
introduced by a definition 
$x_f {:=} f(x_1,\dots,x_k)$ with $f \in \Sigma_e$ resp.\ $f \in \Sigma_s$; 
\item[(ii)] the remaining variables ${\overline x}$. 
\end{enumerate}
Consider the formula
$\exists {\overline x}_e G_1({\overline x}, {\overline x}_s, {\overline x}_e)$.

\smallskip
\item[Step 3] Compute a quantifier-free formula 
$\Gamma_1({\overline x}, {\overline x}_s)$  equivalent to 
$\exists {\overline x}_e G_1({\overline x}, {\overline x}_s,{\overline
  x}_e)$ w.r.t.\ $\T_0$ using a  method for quantifier elimination in 
${\mathcal T}_0$.  

\smallskip
\item[Step 4] Let $\Gamma({\overline x})$ be the formula 
obtained by replacing back in 
$\Gamma_1({\overline x}, {\overline x}_s)$ 
the variables $x_g$ in ${\overline x}_s$ 
introduced by definitions $x_g := g(x_1, \dots,
x_k)$ with the terms $g(x_1, \dots,x_k)$.\!\!\!\! 

\smallskip
\end{description}
} 
\label{alg-symb-elim}
\end{algorithm}

We identify a situation in which a most general solution w.r.t.\ $\T =
\T_0 \cup {\sf UIF}_{\Sigma}$ can be found. 
We assume that $\T_0$ satisfies assumptions $(A1), (A2)$ and
$(A3)$. 

\smallskip
In what follows we assume, w.l.o.g., that $\phi = G$, where $G$ is a
conjunction of literals. Here we will only consider the case in which 
$G$ is flat (i.e.\ the only arguments of function symbols with arity 
$\geq 1$ are variables) and Algorithm~1 can be applied. 
The case in which $G$ is not flat is briefly discussed at the end of
this section.

\begin{thm}
Let $\T = \T_0 \cup {\sf UIF}_{\Sigma}$ be a theory with signature
$\Pi$, where $\T_0$ allows
quantifier elimination, and let $G$ be a set of flat $\Pi$-clauses. 
Algorithm~\ref{alg-symb-elim} applied to the case when 
$\Sigma_e = \{ f \}$ and $\Sigma_s = \Sigma \backslash \{ f \}$, 
returns a formula $\Gamma({\overline x})$ equivalent to $\exists f G$ w.r.t.\ $\T_0 \cup {\sf
  UIF}_{\Sigma}$. 
\label{soqe}
\end{thm}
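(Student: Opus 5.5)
We prove the two directions of $\Gamma(\overline{x}) \equiv_{\T} \exists f\, G(\overline{x})$ separately, with the flattening of Step~1 as the bridge. Write $G_1 = G_0 \cup {\sf Con}_0$. Split the variables introduced in Step~1 into $\overline{x}_e$ (those $x_t$ with $t = f(\dots)$) and $\overline{x}_s$ (those $x_t$ with $t=g(\dots)$, $g \in \Sigma_s$). We will show
\[
\exists f\, G \;\equiv_{\T}\; \exists f\, \exists \overline{x}_e\, G_1[\overline{x}_s \mapsto \text{def. terms}],
\]
and then use Step~3 and Step~4 to replace the inner existential by $\Gamma$. Since $G$ is flat, every $f$-term has the form $f(z_1,\dots,z_k)$ with the $z_j$ among the original variables. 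Hence ${\sf Def}$ contains only definitions $x_t \approx g(z_1,\dots,z_k)$ whose arguments are variables of $G$, and $G_0$ is $G$ with each such term replaced by its $x_t$.

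\emph{Soundness} ($\exists f\,G \models_{\T} \Gamma$). Let $(\A,\beta)$ be a model of $\T_0 \cup {\sf UIF}_{\Sigma}$ with an interpretation $f_{\A}$ such that $G$ holds. Extend $\beta$ by setting $\beta(x_t) := \A(\beta)(t)$ for every defined $x_t$. Then $G_0$ holds by construction. ${\sf Con}_0$ holds because $f_{\A}$ and each $g_{\A}$ are functions, so equal arguments give equal values. Therefore $(\A,\beta) \models \exists \overline{x}_e\, G_1$, and by Step~3 it satisfies $\Gamma_1(\overline{x}, \overline{x}_s)$. Since $\beta(x_g) = \A(\beta)(g(\dots))$ for $x_g \in \overline{x}_s$, it also satisfies $\Gamma(\overline{x})$.

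\emph{Completeness} ($\Gamma \models_{\T} \exists f\, G$). This is the main step. Let $(\A,\beta) \models \Gamma(\overline{x})$, where $\A$ is a model of $\T_0 \cup {\sf UIF}_{\Sigma}$ interpreting the symbols in $\Sigma_s$. Set $\beta(x_g) := \A(\beta)(g(\dots))$ for $x_g \in \overline{x}_s$. Then $\Gamma_1$ holds, so by Step~3 (quantifier elimination in $\T_0$, and $\A$ restricted to $\Pi_0$ is a $\T_0$-model) there are values $\beta(x_e)$ for $\overline{x}_e$ satisfying $G_0 \cup {\sf Con}_0$. Now redefine $f$ in $\A$ to obtain a new interpretation $f'$:
\begin{itemize}
\item $f'(\beta(z_1),\dots,\beta(z_k)) := \beta(x_t)$ for each definition $x_t \approx f(z_1,\dots,z_k)$ in ${\sf Def}$;
\item $f'$ is arbitrary (e.g.\ equal to the old $f_{\A}$) on all other tuples.
\end{itemize}
The instances of ${\sf Con}_0$ for $f$ guarantee that $f'$ is well defined: if two $f$-definitions have pointwise equal argument values, their assigned values coincide. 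Changing $f$ leaves the $\Pi_0$-reduct and the interpretations of $\Sigma_s$ untouched, so the structure is still a model of $\T_0 \cup {\sf UIF}_{\Sigma}$ ($f$ is uninterpreted). Flatness ensures the arguments of $f$ do not themselves depend on $f$, so under $f'$ every $f$-term in $G$ evaluates to its $x_t$, every $\Sigma_s$-term evaluates to its $x_g$, and $G$ holds because $G_0$ does. Hence $(\A,\beta) \models \exists f\, G$.

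The point needing the most care is exactly this well-definedness, together with the claim that redefining $f$ does not disturb the other definitions. Flatness makes both immediate, since no term in ${\sf Def}$ has an $f$-term nested inside its arguments. Without flatness one would first flatten $G$ by introducing fresh variables with definitions, which changes the problem by an existential prefix; this is the case the paper defers.
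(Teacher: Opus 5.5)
Your proof is correct and is essentially the paper's argument. For soundness you extend the valuation along ${\sf Def}$ so that $G_0 \cup {\sf Con}_0$ holds, then apply quantifier elimination in the $\T_0$-reduct. For completeness you use quantifier elimination to obtain witnesses for $\overline{x}_e$ and define $f$ pointwise on the argument tuples occurring in ${\sf Def}$, with ${\sf Con}_0$ and flatness guaranteeing well-definedness. The only difference is cosmetic: the paper starts from a $\Pi_0 \cup \Sigma_s$-structure and adds an interpretation of $f$, whereas you overwrite $f$ in a full model, which amounts to the same thing since $f$ is uninterpreted.
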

{\em Proof:} 
We prove, with the notation in Algorithm~1, that the following two statements hold for every set $G$ of
flat ground clauses:
\begin{itemize}
\item[(i)] For every $\Sigma$-structure $\B$ which is a model of
  $\T_0 \cup {\sf UIF}_{\Sigma}$, and every $\beta : X \rightarrow \B$
  with $(\B, \beta) \models G$ we have 
$(\B, \beta) \models \Gamma({\overline x})$. 
\item[(ii)] For every $\Pi_0 \cup \Sigma_s$-structure $\B$
which is a model of $\T_0 \cup {\sf UIF}_{\Sigma_s}$ and every $\beta
: X \rightarrow \B$ such that 
$(\B, \beta) \models  \Gamma({\overline x})$  there exists a 
$\Sigma$-structure $\C$ such that
$\C_{|\Pi_0 \cup \Sigma_s} = \B_{|\Pi_0 \cup \Sigma_s}$
and $(\C, \beta) \models G$.
\end{itemize}
(i) Let $\B$ be a $\Sigma$-structure which is a model of $\T_0$, and 
let $\beta :  X\rightarrow \B$ such that $(B, \beta) \models G$. 
Let ${\overline \beta}$ be the extension of $\beta$ to the 
new variables ${\overline x}_s, {\overline x}_e$
introduced in {\bf Step 1} of Algorithm 1, defined such that 
 ${\sf Def}$ holds. Then $(\B, {\overline \beta}) \models
 G_1({\overline x}, {\overline x}_s, {\overline x}_e)$, 
so $(\B_{|\Pi_0}, {\overline \beta}) \models G_1({\overline x},{\overline x}_s, {\overline x}_e)$.
$\B_{|\Pi_0}$ is 
a model of $\T_0$, so since $\T_0$ allows quantifier elimination
it follows that
$(\B_{|\Pi_0}, {\overline \beta}) \models \Gamma_1({\overline x}, {\overline x}_s)$
(which is equivalent to $\exists {\overline x}_e G_1({\overline x}, {\overline x}_s,{\overline
  x}_e)$ w.r.t.\ $\T_0$). 
Since $(\B, \overline{\beta}) \models {\sf Def}$, it follows that 
$(\B, \beta) \models \Gamma({\overline x})$. 

\bigskip
\noindent (ii) Let $\B$ be a  $\Pi_0 \cup \Sigma_s$-structure 
which is a model of  $\T_0 \cup {\sf UIF}_{\Sigma_s}$ 
and let $\beta : X \rightarrow \B$ such
that $(\B, \beta) \models \Gamma({\overline x})$.
Let ${\overline \beta}$ be the extension of $\beta$ to the 
new variables ${\overline x}_s$ introduced in {\bf Step 1},
defined according to the subset ${\sf Def}_s$ of ${\sf Def}$ which
contains the definitions of the new variables in ${\overline x}_s$.
Then $(\B, {\overline \beta}) \models \Gamma_1({\overline  x},{\overline x}_s)$.
Since, in particular, $\B$ is a model of $\T_0$ and 
$\Gamma_1({\overline x}, {\overline x}_s) \equiv_{\T_0}
\exists {\overline x}_e  G_1({\overline x}, {\overline x}_s,{\overline
  x}_e)$, 
there exists another valuation ${\overline \beta}'$ which coincides 
with ${\overline \beta}$ on the variables in ${\overline x}$ and
${\overline x}_s$, but possibly has new interpretations for the
variables  in ${\overline x}_e$, such that $(\B, {\overline \beta}')
\models G_1({\overline x}, {\overline x}_s,{\overline  x}_e)$. 

\smallskip
Note that ${\overline  x}_e$ consists of variables  
resulting from renaming terms of the form $f(x_1, \dots, x_n)$.
Let ${\sf Def}_f$ be the subset of ${\sf Def}$ containing definitions
for the variables in ${\overline  x}_e$.

\smallskip
Let $\C = (|\B|, \{ g_{\B} \}_{g \in \Pi_0 \cup \Sigma_s} \cup \{f_{\C} \})$,
where the definition of symbols in $\Pi_0 \cup \Sigma_s$
is as in $\B$, and $f_{\C}$ is defined as follows: 
$$f_{\C}(a_1, \dots, a_m) = \left\{ \begin{array}{ll}
{\overline \beta}'(x)  & \text{ if  there exist } x_1, \dots, x_m \text{ with } a_i
= \beta'(x_i) \text{ and } \\
& ~~~~~~~~~~~~~~~~~~~~~~~~~~~~x \approx f(x_1, \dots, x_m) \in {\sf Def} \\
a & \text{ otherwise}
\end{array} \right.$$
\noindent where $a$ is a fixed element in $|\B|$. 
Since $(\B, {\overline \beta}') \models G_1({\overline x}, {\overline
  x}_s,{\overline x}_e) = G_0 \wedge {\sf
  Con}_0$, and $G$ was assumed to be flat,
it follows that $f_{\C}$ is well-defined. 

\smallskip
Since the definitions of the functions in $\Sigma_s$ in $\C$ are the
same as in $\B$, we have $(\C, \beta') \models {\sf Def}_s$. 
By the way $f_{\C}$ is defined, $(\C, \beta') \models {\sf Def}_f$.
Thus, $(\C, \beta') \models {\sf Def}$.  

\smallskip
Since $(\B, {\overline \beta}') \models G_1({\overline x}, {\overline
  x}_s,{\overline x}_e)$,  from the way $\C$ is defined it follows that 
$\C$ is a model 
of $\T_0 \cup {\sf UIF}_{\Sigma}$. 
with $\C_{|\Pi_0 \cup \Sigma_s} =
\B_{|\Pi_0 \cup \Sigma_s}$ such that $(\C, \beta) \models G$. \QED

\

\

\noindent We make the following assumptions and use 
the following notation.
\begin{description}
\item[(A)] 
$\T = \T_0 \cup {\sf UIF}_{\Sigma}$, where $\T_0$ 
is a theory with signature $\Pi_0$ 
satisfying assumptions (A1), (A2), (A3) with 
${\cal F}$ being the set of all quantifier-free formulae. 
\item[(B)] 
Let $\A$ as in assumption (A1),  
$\Sigma'_0, \Pi_0' = (\Sigma'_0, {\sf Pred})$, $\A'$, and $\T_0'$ as in
Definition~\ref{defi:tprim}, and let $\Sigma' = \Sigma_0' \cup
\Sigma$, and $\Pi' =  (\Sigma', {\sf Pred})$. 
Let $\T'$ be the theory described by 
$\M = \{ \B \mid \B \text{ is a } \Pi'\text{-structure with } \B_{|\Pi'_0} = \A'
\}$, consisting of all $\Pi$-structures whose reduct to $\Pi'_0$ is $\A'$.

\item[(C)] $G$ is a 
conjunction of flat literals and -- with the notation used in 
Algorithm~1 -- $G_1 = G_0 \cup {\sf Con}_0 \in {\cal F}$,  
$\Sigma_e = \{ f  \}$ and $\Sigma_s = \Sigma \backslash \{ f \}$. 
If there exists a $\Sigma'$-term $t({\overline z})$ with free
variables ${\overline z}$ with $\T' \models \forall {\overline z} (f_1({\overline z}) =
t({\overline z}))$, we denote
by $G[f \mapsto f_1]$ the formula obtained from $G$ by replacing 
all occurrences $f({\overline z})$ with the term $f_1({\overline z}) =
t({\overline z})$.
\end{description}
\begin{lem} Assume that 
$\T$, $G$ and $f$ satisfy conditions (A), (B), (C).  
Let 
${\overline x}_e = x^f_1, \dots, x^f_k$ be the variables
introduced by definitions $x^f_i = f({\overline x}_i)$ in Algorithm~1.
With the notation in Algorithm 1, 
let $\sigma$ be a solution  w.r.t.\ $\T_0$ for the 
problem 
$\exists {\overline x}_e G_1({\overline x}, {\overline x}_s, {\overline x}_e)$ under the condition that
$\Gamma_1$ holds. Assume that, for every $1 \leq i \leq k$,
$\sigma(x^f_i) = t_i({\overline x}, {\overline x}_s)$, where $t_i({\overline x}, {\overline x}_s)$ is
a $\Sigma'$-term containing variables ${\overline x}, {\overline x}_s$.
\begin{enumerate}
\item[(i)] 
For every model $\B$ of $\T'$ in ${\cal M}$ and every valuation 
$\beta$ with $(\B, \beta) \models \Gamma({\overline x})$ 
let $\C$ be obtained from $\B$ by interpreting all  symbols in
$\Pi_0 \cup \Sigma_s$ as in $\B$ and interpreting $f_{\C} = f_0$, where for all 
$a_1, \dots, a_m \in |\A| = |\B| = |\C|$: 

\smallskip
\noindent $f_0(a_1, \dots, a_m) := \left\{ \begin{array}{ll}
\B(\beta)(\sigma(x_f)) & \text{ if for all } i \text{ we have } a_i =
\beta(x_{j_i}) \\
& \text{ with } f(x_{j_1}, \dots, x_{j_m}) \approx x_f \in {\sf Def} \\
c_{\B}  & \text{ otherwise} 
\end{array} \right.$

\smallskip
\noindent where $c$ is a fixed constant in $\Pi_0 \cup \Sigma \cup C$.
Then $(\C, \beta) \models G$. 

\item[(ii)] 
If ${\overline x}_e = x^f_1, \dots, x^f_k$, are the variables
introduced by definitions $x^f_i = f({\overline x}_i)$, let 

\smallskip
$\begin{array}{ll} 
f_0({\overline z}) = & ({\sf if}~{\overline z} \approx {\overline
  x}_1~{\sf then}~\overline{\sigma(x^f_1)}~{\sf else} \\
& ({\sf if}~{\overline z} \approx {\overline
  x}_2~{\sf then}~\overline{\sigma(x^f_2)}~{\sf else} \\
& \dots \\
& {\sf if }~{\overline z} \approx {\overline x}_k~{\sf
  then}~\overline{\sigma(x^f_k)}~{\sf else}~c), 
\end{array}$ 

\smallskip
\noindent where if ${\overline z} = z_1 \dots z_m$ and 
${\overline x_i} = x_{i1} \dots x_{im}$ then 
${\overline z} \approx {\overline x_i}$ is an
abbreviation for $\bigwedge_{j = 1}^n z_j \approx x_{ij}$, 
and $\overline{\sigma(x^f_i)}$ is obtained from the term 
$\sigma(x^f_i)$ by replacing all variables in ${\overline x}_s$ 
with the terms they represent according to ${\sf Def}$.

Then for every model $\B$ of $\T'$ in ${\cal M}$ and every valuation 
$\beta$ with $(\B, \beta) \models \Gamma({\overline x})$ we have 
$(\B, \beta) \models G[f \mapsto f_0]$.
\end{enumerate}
\label{part-sol}
\end{lem}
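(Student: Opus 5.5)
The plan is to reduce both parts to the argument in the proof of Theorem~\ref{soqe}(ii), with the valuation $\overline{\beta}'$ obtained there from quantifier elimination replaced by the explicit values supplied by the solution $\sigma$. Fix $\B\in\M$ and $\beta$ with $(\B,\beta)\models\Gamma(\overline{x})$. Extend $\beta$ to $\overline{\beta}$ on the variables $\overline{x}_s$ using ${\sf Def}_s$, i.e.\ $\overline{\beta}(x_g)=\B(\beta)(g(\dots))$. By construction of $\Gamma$ in Step~4 we then have $(\B,\overline{\beta})\models\Gamma_1(\overline{x},\overline{x}_s)$. Since $\B_{|\Pi'_0}=\A'$ and $\sigma$ is a conditional solution w.r.t.\ $\T_0'$ (with {\sf ite}-terms) of $\exists\overline{x}_e G_1$ under $\Gamma_1$, we get $(\B,\overline{\beta})\models\sigma(G_1)$. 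Now define $\overline{\beta}'$ to agree with $\overline{\beta}$ on $\overline{x},\overline{x}_s$ and to set $\overline{\beta}'(x^f_i)=\B(\overline{\beta})(t_i)$. Then $(\B,\overline{\beta}')\models G_1=G_0\wedge{\sf Con}_0$.

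For (i), I would show that $f_0$ is well defined and then check $G$. If two definitions $f(\overline{x}_i)\approx x^f_i$ and $f(\overline{x}_j)\approx x^f_j$ have arguments with equal values under $\beta$, then the clause of ${\sf Con}_0$ for this pair, which holds under $\overline{\beta}'$, forces $\overline{\beta}'(x^f_i)=\overline{\beta}'(x^f_j)$. So the first case of $f_0$ is unambiguous. (Here I read $\B(\beta)(\sigma(x_f))$ as $\B(\overline{\beta})(t_i)$.) Consequently $(\C,\overline{\beta}')\models{\sf Def}_f$, and ${\sf Def}_s$ still holds because $\C$ agrees with $\B$ on $\Sigma_s$. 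Since $G$ is flat, it is exactly $G_0$ with the definitions ${\sf Def}$ substituted back in. As $G_0$ is a $\Pi_0$-formula that holds in the $\Pi_0$-reduct of $\C$ (which is the same as that of $\B$) under $\overline{\beta}'$, we conclude $(\C,\beta)\models G$.

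For (ii), I would show that in any $\B\in\M$ under $\beta$, the $\Pi'$-term $f_0(\overline{z})$ built from nested ${\sf ite}$'s evaluates, as a function of the values of $\overline{z}$, to exactly the function $f_0$ of part (i). The one subtlety is that the nested ${\sf if}$ picks the \emph{first} index $i$ with $\overline{z}\approx\overline{x}_i$, whereas (i) allowed any matching index. These agree by the well-definedness just shown, via ${\sf Con}_0$. We also need $\overline{\sigma(x^f_i)}$ to evaluate under $\beta$ to $\B(\overline{\beta})(t_i)$; this holds because substituting the ${\sf Def}_s$-terms for $\overline{x}_s$ commutes with evaluation, by the choice of $\overline{\beta}$. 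Then $G[f\mapsto f_0]$ evaluated in $\B$ equals $G$ evaluated in $\C$, and (i) gives $(\B,\beta)\models G[f\mapsto f_0]$.

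The main obstacle is the well-definedness and ordering issue. Conditionality of $\sigma$ yields $G_1$ only under $\Gamma_1$, so I must make sure that ${\sf Con}_0$, which is part of $G_1$, is actually available for $\overline{\beta}'$. This is exactly why I transfer $\Gamma$ to $\Gamma_1$ through $\overline{\beta}$ before anything else.
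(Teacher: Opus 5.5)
Your proposal is correct and follows essentially the same route as the paper. Both extend $\beta$ along ${\sf Def}_s$ to obtain $\Gamma_1$, use that $\sigma$ is a conditional solution to get $G_1$ with the values of the $t_i$ plugged in for ${\overline x}_e$, invoke ${\sf Con}_0$ for the well-definedness of $f_0$, and obtain (ii) from the fact that $G[f\mapsto f_0]$ evaluated in $\B$ agrees with $G$ evaluated in $\C$. Your handling of the first-match behaviour of the nested {\sf ite}, and your reading of $\B(\beta)(\sigma(x_f))$ as $\B(\overline{\beta})(t_i)$ (which the paper's proof also uses), only fill in details the paper leaves implicit.
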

{\em Proof:} 
(i) Let $\B \in {\cal M}$ and 
$\beta$ be a valuation such that 
$(\B, \beta) \models \Gamma({\overline x})$. 
We can extend $\beta$ to 
the new variables ${\overline x}_s$ introduced in {\bf Step 1},
defined according to the subset ${\sf Def}_s$ of ${\sf Def}$ which
contains the definitions of the new variables in ${\overline x}_s$
and obtain a valuation $\beta'$ such that 
$(\B, \beta') \models \Gamma_1({\overline x}, {\overline x}_s)$.
Since $\sigma$ is a solution for $\exists {\overline x}_e
G_1({\overline x}, {\overline x}_s, {\overline x}_e)$ 
under the condition that $\Gamma_1({\overline x}, {\overline x}_s)$
holds, we know that -- if 
${\overline x}_e = x^f_1, \dots, x^f_k$
where $x^f_i = f({\overline x}_i) \in {\sf Def}$ -- we have:
$$\T' \models \Gamma_1({\overline x}, {\overline x}_s) \rightarrow G_1({\overline x}, {\overline x}_s, 
\sigma(x^f_1), \dots, \sigma(x^f_k)),$$
so $(\B, \beta') \models G_1({\overline x}, {\overline x}_s, 
\sigma(x^f_1), \dots, \sigma(x^f_k))$. 

Let $\C$ be obtained from $\B$ by interpreting all symbols in
$\Pi_0 \cup \Sigma_s$ as in $\B$ and interpreting 
$f_{\C} = f_0$, where for all 
$a_1, \dots, a_m \in |\A| = |\B| = |\C|$: 

\smallskip
\noindent $f_0(a_1, \dots, a_m) := \left\{ \begin{array}{ll}
\B(\beta')(\sigma(x_f)) & \text{ if for all } i \text{ we have } a_i =
\beta(x_{j_i}) \\
& \text{ with } f(x_{j_1}, \dots, x_{j_m}) \approx x_f \in {\sf Def} \\
c_{\B}  & \text{ otherwise} 
\end{array} \right.$

\smallskip
\noindent The function $f_0$ is well-defined because 
$(\B, \beta') \models G_1({\overline x}, {\overline x}_s, 
\sigma(x^f_1), \dots, \sigma(x^f_k))$, and 
$G_1({\overline x}, {\overline x}_s, {\overline x}_e) = G_0 \wedge {\sf
  Con}_0$, hence for every 
$(\bigwedge_{i = 1}^m x_i \approx y_i) \rightarrow x^f \approx y^f \in
{\sf Con}_0$, where 
$x^f \approx f(x_1, \dots, x_m), y^f \approx  f(y_1, \dots, y_m) \in
{\sf Def}$, we have 
$$(\B, \beta') \models (\bigwedge_{i = 1}^m x_i \approx y_i) \rightarrow
\sigma(x^f) \approx \sigma(y^f).$$ 
Then, with the new definition for $f_{\C}$, 
$(\C, \beta') \models G_1({\overline x}, {\overline x}_s, 
f({\overline x}_1), \dots, f({\overline x}_k))$, 
so, since $(C, \beta') \models {\sf Def}_s$, we have 
$(\C, \beta) \models G({\overline x})$.

\smallskip
\noindent (ii) is a consequence of the fact that 
$(\B, \beta) \models G[f \mapsto f_0]$ iff $(\C, \beta) \models G$, 
where $\C$ is the structure 
obtained from $\B$ by changing the interpretation of $f$ to $f_0$. \QED

\begin{thm} Assume that 
$\T$, $G$ and $f$ satisfy conditions (A), (B), (C), and the premises
of Lemma~\ref{part-sol} hold.
Let $\Gamma({\overline x})$ be the formula returned by 
Algorithm~\ref{alg-symb-elim} when used to 
eliminate $f$ from $G$. 
Let $f_0$ be the particular solution in Lemma~\ref{part-sol}. 
Then ${\tilde \sigma}(f) = (\!\!\ite{G}{f}{f_0})$ is a most general solution 
for $\exists f G({\overline x})$ under assumptions $\Gamma$ in the
following sense:
\begin{enumerate}
\item[(i)] For every model $\B$ of $\T'$ in ${\cal M}$ and 
every $\beta : X \rightarrow \B$, if $(\B, \beta) \models 
\Gamma({\overline x})$ then 
$(\B, \beta) \models G[f \mapsto {\tilde \sigma}(f)]$. 

\item[(ii)] Let $\mu(f)$ be a solution for $\exists f \,
  G({\overline x})$ under the condition that $\Gamma$ holds, 
i.e.\ such that 
$\T' \models \Gamma({\overline x})  \rightarrow G[f \mapsto
\mu(f)]({\overline x})$.  

Assume that $\mu(f)$ has the property that for all 
  variables $x_1,\dots,x_m$ among those in $G$, we can express 
  $\mu(f)(x_1,\dots, x_m)$
  as a $\Sigma'$-term $t({\overline x})$ 
  containing only variables in the set 
  ${\overline x}$ of
  variables in $G$.  

  Then $\T' \models \Gamma \rightarrow
(\mu({\tilde \sigma}(f))(x_1, \dots, x_m) \approx \mu(f)(x_1, \dots,
x_m))$ for all such variables $x_1, \dots, x_m$.
\end{enumerate}
\label{soqe-mgu}
\end{thm}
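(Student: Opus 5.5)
The plan mirrors the proof of Theorem~\ref{reprod-sol-qe}, reading ${\tilde \sigma}(f)$ pointwise. For a model $\B \in {\cal M}$ and a valuation $\beta$, the term $G[f \mapsto {\tilde \sigma}(f)]$ is interpreted by the switching behaviour of the {\sf ite}-operations in $\A'$. Concretely, ${\tilde \sigma}(f)({\overline z})$ is evaluated as $f_{\B}({\overline z})$ if $(\B,\beta) \models G$, and as $f_0({\overline z})$ otherwise. The test $G$ is the closed (with respect to ${\overline z}$) condition on the variables ${\overline x}$, so the whole function $f$ is switched at once. The argument therefore splits into two cases according to whether $(\B,\beta) \models G$, and this case split is uniform in ${\overline z}$.

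For (i), fix $\B \in {\cal M}$ and $\beta$ with $(\B,\beta)\models \Gamma({\overline x})$.
\begin{itemize}
\item If $(\B,\beta) \models G$, then every occurrence $f({\overline x}_i)$ in $G[f\mapsto{\tilde\sigma}(f)]$ evaluates to $f_{\B}(\beta({\overline x}_i))$. Hence $G[f\mapsto{\tilde\sigma}(f)]$ evaluates as $G$ does, i.e.\ to true.
\item Otherwise every occurrence evaluates to $f_0(\beta({\overline x}_i))$. Then Lemma~\ref{part-sol}(ii) gives $(\B,\beta)\models G[f\mapsto f_0]$, which is exactly the required statement.
\end{itemize}
Since $G$ is flat, the arguments of $f$ are variables. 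Substituting ${\tilde\sigma}(f)$ therefore only changes the values at the finitely many argument tuples $\beta({\overline x}_i)$, which makes this evaluation clean.

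For (ii), let $\mu$ be a solution under $\Gamma$ and fix variables $x_1,\dots,x_m$ of $G$. Applying $\mu$ to ${\tilde\sigma}(f)$ gives
\[
\mu({\tilde\sigma}(f))(x_1,\dots,x_m) = \ite{G[f\mapsto \mu(f)]}{\mu(f)(x_1,\dots,x_m)}{f_0(x_1,\dots,x_m)}.
\]
Here $f_0$ contains no $f$: by its construction in Lemma~\ref{part-sol}(ii) it only uses the terms $\overline{\sigma(x^f_i)}$, which mention symbols of $\Sigma_s$ and $\Sigma'_0$. So $\mu$ leaves $f_0$ unchanged.

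Now take $\B\in{\cal M}$ and $\beta$ with $(\B,\beta)\models\Gamma$. By hypothesis $\T'\models \Gamma\rightarrow G[f\mapsto\mu(f)]$, so the condition of the {\sf ite} is true. The expression therefore evaluates to $\B(\beta)(\mu(f)(x_1,\dots,x_m))$. Since $\T'$ is the theory of ${\cal M}$, this gives $\T'\models\Gamma\rightarrow \mu({\tilde\sigma}(f))(x_1,\dots,x_m)\approx\mu(f)(x_1,\dots,x_m)$.

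The main obstacle is making precise what $\mu$ applied to ${\tilde\sigma}(f)$ means, and that the {\sf ite} with condition $G$ is a legitimate $\Sigma'$-term after substitution. This is where the extra hypothesis on $\mu$ is used: $\mu(f)(x_1,\dots,x_m)$ must be expressible as a $\Sigma'$-term $t({\overline x})$ over the variables of $G$. That hypothesis makes $G[f\mapsto\mu(f)]$ a quantifier-free $\Pi'$-formula in ${\overline x}$, so ${\sf ite}_{G[f\mapsto\mu(f)]}$ is defined via the nested ${\sf ite}_p$ construction of Section~\ref{sect:ite}. The restriction to tuples of variables occurring in $G$ is why (ii) is stated only for such $x_1,\dots,x_m$. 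I would state these points explicitly and check that the evaluation of the nested ${\sf ite}$ agrees with the case split used above.
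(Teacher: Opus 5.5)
Your proposal is correct and follows essentially the same route as the paper. For (i) you split on whether $(\B,\beta)\models G$ and invoke Lemma~\ref{part-sol} in the negative case. For (ii) you push $\mu$ through the {\sf if}-{\sf then}-{\sf else}, note that $\mu(f_0)=f_0$, and use $\T'\models\Gamma\rightarrow G[f\mapsto\mu(f)]$ to select the first branch. Your extra remarks on why $G[f\mapsto\mu(f)]$ is a legitimate {\sf ite}-condition make explicit a point the paper leaves implicit.
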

{\em Proof:} 
(i) Let $\B \in {\cal M}$ and $\beta : X \rightarrow \B$ with 
$(\B, \beta) \models \Gamma({\overline x})$. 
It is not necessarily 
the case that $(\B, \beta) \models G({\overline x})$. We show 
that $(\B, \beta) \models G[f \mapsto {\tilde \sigma}(f)]$.

\smallskip
$\B(\beta)(G[f \mapsto {\tilde \sigma}(f)]) =
 \left\{ \begin{array}{ll}
\B(\beta)(G({\overline x})) & \text{ if } (\B, \beta) \models
G({\overline x}) \\
\B(\beta)(G[f \mapsto f_0]({\overline x})) & \text{ otherwise}
\end{array} \right.$ 

\smallskip
\noindent By Lemma~\ref{part-sol}, $(\B, \beta) \models G[f \mapsto
f_0]$, so $(\B, \beta) \models G[f \mapsto {\tilde \sigma}(f)])$.

\medskip
\noindent (ii) Let now $\mu$ be a map which substitutes $f$ 
with an ``{\sf if}-{\sf then}-{\sf else}'' expression $\mu(f)$ such
that $\mu(f)$ is a solution for $\exists f \,
  G({\overline x})$ under the condition that $\Gamma$ holds, 
i.e.\ 
$\T' \models \Gamma({\overline x})  \rightarrow G[f \mapsto
\mu(f)]({\overline x})$.  
Assume that $\mu(f)$ has the property that for all 
  variables $x_1,\dots,x_m$ among those in $G$, we can express 
  $\mu(f)(x_1,\dots, x_m)$
  as a $\Sigma'$-term $t({\overline x})$ 
  containing only variables in the set 
  ${\overline x}$ of
  variables in $G$.  

Let $\B$ be a model of $\T'$ in ${\cal M}$ and $\beta : X \rightarrow
\B$ such that $(\B, \beta) \models \Gamma({\overline x})$. 
Then clearly $(\B, \beta) \models G[f \mapsto
\mu(f)]({\overline x})$, because $\mu$ is a solution under assumptions
$\Gamma$. 
Then, if ${\overline z} = x_1 \dots x_m$ we have:

\smallskip
$\begin{array}{ll}
\B(\beta)(\mu({\tilde \sigma}(f))({\overline z})) & =  \B(\beta)(\mu({\sf if}~(G[f]~{\sf
  then}~f~{\sf else}~f_0))({\overline z})) = \\
& = \B(\beta)(({\sf if}~G[f \mapsto \mu(f)]~{\sf
  then}~\mu(f)~{\sf else}~\mu(f_0))({\overline z})) \\
& = \B(\beta)({\sf if}~G[f \mapsto \mu(f)]~{\sf
  then}~\mu(f)({\overline z})~{\sf else}~f_0({\overline z})) \\
& = \left\{ \begin{array}{ll} 
\B(\beta)(\mu(f)({\overline z})) & \text{ if } (\B, \beta) \models G[f \mapsto
\mu(f)]\\
\B(\beta)(f_0({\overline z}))  & \text{ otherwise }
\end{array} \right.\\
& =  \B(\beta)(\mu(f)({\overline z}))
\end{array}$

\QED

\begin{ex} Let $\T = LI({\mathbb R}) \cup {\sf UIF}_{\{ f, g\}}$ and
  $G = f(x_1) \approx g(x_2) \wedge f(x_1) \leq x_3
  \wedge g(x_2)  \geq x_4$. 
We can use Algorithm~1 to eliminate $f$ as follows: 

\begin{description}
\item[Step 1:] With the definitions ${\sf Def} = \{
  x_f = f(x_1), x_g = g(x_2) \}$ we obtain 

$G_1 = G_0  = x_f \approx x_g \wedge x_f
  \leq x_3 \wedge x_g \geq x_4$. 

No congruence axioms need to be considered.
\item[Step 2:] Only $x_f$ needs to be eliminated. The test point is $T = \{ x_g \}$.
\item[Step 3:] We obtain $\exists x_f (x_f \approx x_g \wedge x_f
  \leq x_3 \wedge x_g \geq x_4) \equiv x_g \leq x_3 \wedge x_g \geq x_4$. 
\item[Step 4:]  We replace back constants and obtain: $\Gamma = g(x_2) 
  \leq x_3 \wedge g(x_2) \geq x_4$.
\end{description}
Under the condition that 
$g(x_2)
  \leq x_3 \wedge g(x_2) \geq x_4$ a solution exists. It can be constructed
  as follows: 

$f_0(x) = \left\{ \begin{array}{ll}
g(x_2) & \text{ if } x \approx x_1 \\
k   & \text{ otherwise} \end{array} \right. = {\sf if}~x \approx x_1~{\sf
    then}~g(x_2)~{\sf else}~c$

\noindent Based on this, we can construct the most general solution: 

${\tilde \sigma}(f)(x) =  {\sf if}~f(x_1) \approx g(x_2) \wedge f(x_1) \leq x_3 \wedge g(x_2)
  \geq x_4~{\sf then}~f(x)~{\sf else}~f_0(x).
$ 
\hfill $\blacksquare$
\end{ex}

\

\noindent {\bf The case in which $G$ is not flat.}
If $G$ is not flat, Algorithm~\ref{alg-symb-elim} cannot be used. 
We can flatten $G$, by introducing new variables. 
This means that the solution $\sigma(f)$ might contain also other
constants in addition to the ones initially contained in $G$. 
We illustrate the situation on an example, where 
$\T = LI({\mathbb R}) \cup {\sf UIF}_{\{ f, g \}}$. 

\begin{ex}
Consider the unification problem $\exists f (g(f(x)) \approx
f(g(x)))$. 
After flattening  $g(f(x)) \approx f(g(x))$ we obtain 
$y \approx f(x) \wedge z \approx g(x) \wedge g(y) \approx f(z)$. 

\smallskip
$\exists f (g(f(x)) \approx f(g(x))) \equiv \exists y \exists z
\exists f (y \approx
f(x) \wedge z \approx g(x) \wedge g(y) \approx f(z)).$

\smallskip
\noindent We can use Algorithm~\ref{alg-symb-elim} to eliminate $f$ in $\exists f  (y \approx
f(x) \wedge z \approx g(x) \wedge g(y) \approx f(z))$ 
as follows:
\begin{description}
\item[Step 1:] We purify and add instances of the congruence axioms.
If we add the definitions ${\sf Def} :=\{ x_f = f(x), x_g = g(x),  y_g 
= g(y), z_f = f(z) \}$ we obtain:

\smallskip
$y \approx x_f \wedge z \approx x_g \wedge y_g \approx z_f \wedge
(x \approx z \rightarrow x_f \approx z_f) \wedge (x \approx y
\rightarrow x_g \approx y_g)$.

\smallskip
\item[Step 2:] Only $f$ is eliminated, so only $x_f, z_f$ are eliminated.
\item[Step 3:] We compute 

$\exists x_f \exists z_f (y \approx x_f \wedge z \approx x_g \wedge y_g \approx z_f \wedge
(x \approx z \rightarrow x_f \approx z_f) \wedge (x \approx y
\rightarrow x_g \approx y_g))$

$\equiv (z \approx x_g \wedge (x \approx z \rightarrow y \approx
y_g) \wedge (x \approx y \rightarrow x_g \approx y_g))$;  $T_{x_f} = \{
y \}, T_{z_f} = \{ y_g \}$.

The unique solution (and hence the most general solution) is the
substitution 
$\sigma_{f}$ with $\sigma_f(x_f) = y$, $\sigma_f(z_f) = y_g$, and
$\sigma_f(u) = u$ for all the other variables.

\item[Step 4:] We replace the new constants with the terms they
  represent and obtain:

$(z \approx g(x) \wedge (x \approx z \rightarrow y \approx g(y)))$.

\smallskip
 $\overline{\sigma_f(x_f)} = y$, $\overline{\sigma_f(z_f)} = g(y)$, and
$\overline{\sigma_f(u)} = u$ for all the other variables.
\end{description}
The initial problem has a solution iff $\exists y \exists z (z \approx g(x) \wedge (x \approx z \rightarrow y \approx g(y)))$ holds.

Note that $\exists y \exists z (z \approx g(x) \wedge (x \approx z \rightarrow y \approx g(y))) \equiv 
\exists y (x \approx g(x) \rightarrow y \approx g(y))$, which holds
for every model of $\T$ (with witness $y = x$).

\smallskip
\noindent The variables $x_f$ and $z_f$ are introduced by the
definitions $x_f = f(x)$ and $z_f = f(z)$. 
By Lemma~\ref{part-sol} we can define a special solution 
$f_0$ by 

$\begin{array}{lcl}
f_0(u) & = & {\sf if}~u {\approx} x~{\sf then}~\overline{\sigma_{x_f}(x_f)}~{\sf
else}~{\sf if}~u {\approx} z~{\sf then}~\overline{\sigma_{z_f}(z_f)}~{\sf else}~c \\
& = & {\sf if}~u {\approx} x~{\sf then}~y~{\sf
else}~{\sf if}~u {\approx} z~{\sf then}~g(y)~{\sf else}~c \\
\end{array}$.  

\smallskip
\noindent Then the most general solution is:

\smallskip
$\begin{array}{lll}
{\tilde \sigma}(f) & = & \ite{(y \approx
f(x) \wedge z \approx g(x) \wedge g(y) \approx f(z))}{f}{f_0} \\
\end{array}$

\smallskip
\noindent Note that $y$ and $z$ occur in $f_0$ and in 
the expression for ${\tilde \sigma}(f)$, so ${\tilde \sigma}(f)$ depends on both $y$
and $z$.

\smallskip
\noindent We conjecture that if we replace $y$ with $f(x)$, 
and $z$ with $g(x)$ in ${\tilde \sigma}(f)$ 
we might obtain a most general solution 
for $\exists f \, G$ (according to conditions similar to those in 
Theorem~\ref{soqe-mgu}) possibly under a suitable condition 
which states that a solution exists. 
An analysis of the problem of determining most general
solutions in the case in which $G$ is not flat in full generality is
planned for future work.
\hfill $\blacksquare$ \end{ex}

\section{Conclusion}
\label{conclusion}

We analyzed possibilities of 
constructing most general solutions of 
formulae of the form $\exists x_1, \dots, \exists x_n \phi(x_1, \dots,
x_n, y_1, \dots, y_m)$ w.r.t.\ certain
theories $\T$ allowing quantifier elimination,  
where $\phi$ is a quantifier-free conjunction of literals 
in the signature of $\T$, and the free variables $y_1, \dots, y_m$ 
are regarded as parameters. 
We proved that if we can extend the language with a type of 
``{\sf if}-{\sf then}-{\sf
  else}''  constructions, we can describe 
the most general solution of such formulae as terms.  
The idea generalizes  
results about the existence of most general unifiers in discriminator
varieties. We then
considered possibilities of generating most general solutions for
certain problems related to second-order elimination in certain
extensions of theories allowing quantifier elimination with free
function symbols.

In future work we would like to extend the results to the situation in
which the extension with {\sf if}-{\sf then}-{\sf else} constructs
$\T'$ is not described by one model only, but by a class of
models, and to better understand the links between
these results and results on unification for varieties generated by
primal algebras, in discriminator varieties and beyond. We would also
like to analyze the applicability of 
the results on generating most general solutions for 
second order existential constraints in the 
context of higher-order unification (cf.\ e.g.\ \cite{SnyderG89,Dowek01}). 

\begin{credits}
\subsubsection*{\ackname} 
We thank the reviewers for their helpful comments. 
The research reported here was funded by the Deutsche
Forschungsgemeinschaft (DFG, German Research Foundation) – grant
465447331.

\end{credits}

\end{document}